\newtheorem{definition}{Definition}
\newtheorem{theorem}{Theorem}
\newtheorem{corollary}{Corollary}
\newtheorem{lemma}{Lemma}
\newtheorem{proposition}{Proposition}
\theoremstyle{remark}
\newcommand{\ea}[1]{\begin{align}#1\end{align}}
\newcommand{\nn}{\nonumber \\}
\newcommand{\la}{\langle}
\newcommand{\ra}{\rangle}
\newcommand{\mcl}{\mathcal}
\newcommand{\mbb}{\mathbb}
\newcommand{\bmt}{\begin{pmatrix}}
\newcommand{\emt}{\end{pmatrix}}
\DeclareMathOperator{\Tr}{Tr}
\begin{document}

\title{Geometric properties of SIC-POVM tensor square%\thanks{Grants or other notes
%about the article that should go on the front page should be
%placed here. General acknowledgments should be placed at the end of the article.}
}

\author{Vasyl Ostrovskyi \and Danylo Yakymenko}
\email{vo@imath.kiev.ua, danylo.yakymenko@gmail.com}  
\address{Institute of mathematics of the National Academy of Sciences of Ukraine, Kyiv, Ukraine}

\begin{abstract}
It's known that if $d^2$ vectors from $d$-dimensional Hilbert space $H$ form a SIC-POVM (SIC for short) then tensor square of those vectors form an equiangular tight frame on the symmetric subspace of $H\otimes H$ \cite{Renes}. We prove that for any SIC of WH-type (Weyl-Heisenberg group covariant) this squared frame can be obtained as a projection of a WH-type basis of $H\otimes H$ onto the symmetric subspace. We give a full description of the set of all WH-type bases, so this set could be used as a search space for SIC solutions. Also we show that a particular element of this set is close to a SIC solution in some structural sense. 
Finally we give a geometric construction of a SIC-related symmetric tight fusion frames that were discovered in odd dimensions in \cite{STFF}.
\keywords{SIC-POVM \and Zauner conjecture \and Weyl-Heisenberg group}
%\subclass{20C35 \and 81P15 \and 81R05}
\end{abstract}

\maketitle

\section{Introduction}
\label{intro}

Let $H$ be a complex Hilbert space with finite dimension $d$. A \textit{symmetric, informationally complete, positive operator-valued measure} (\textit{SIC-POVM} or \textit{SIC} for short) is a set of $d^2$ rank-1 projectors $\Pi_i \in \mcl{L}(H)$, $i \in [1..d^2]$, $\Pi_i^2=\Pi_i^\dag=\Pi_i$, such that 
\ea{
	\sum_{i \in [1..d^2]} \frac{1}{d}\Pi_i = I_d ~\text{ and }~ \forall i,j \in [1..d^2] :~ \Tr(\Pi_i \Pi_j) = \frac{d\delta_{ij}+1}{d+1},
}

where $\delta_{ij}$ is the Kronecker delta.

The set of $d^2$ unit-norm vectors $|v_i\rangle$ from $H$ that correspond to those projectors, i.e. $|v_i\rangle \langle v_i| = \Pi_i$, also will be called SIC. Another terminology for such set of vectors is \textit{a maximal equiangular tight frame (maximal ETF)} on $H$. 

Although equiangular tight frames have a rich history of research, it's mostly in a real space. We refer to \cite{ETF} for a collection of known results about ETFs. 

The first major study of SICs was made by G. Zauner in his doctoral thesis devoted to quantum designs \cite{Zauner}. He constructed SICs in low dimensions, stated the conjecture about the existence of WH-type SICs (Weyl-Heisenberg group covariant) in every dimension $d$, which is now known as \textit{Zauner's conjecture}, and its stronger form -- the existence of WH-type SICs of a special kind (related to order 3 unitary symmetry). Note that Zauner's conjecture is in contrast to the real case where a maximal ETF of $d(d+1)/2$ unit-norm vectors rarely exists for a given $d$. 

The substantial attention this conjecture received after the work of J. M. Renes et al. \cite{Renes} where the authors independently from G. Zauner formulated the conjecture, constructed SICs in low dimensions, found approximate numerical SICs for dimensions up to 45 and related SICs to quantum spherical 2-designs. 

To this day the SIC existence conjecture has a strong supporting evidence. There are found exact solutions for all $d\leq 21$ and for a lot of other dimensions up to 2208, and approximate solutions for all $d \leq 151$. 
We refer to \cite{STFF}, \cite{Fuchs}, \cite{Waldron}, \cite{Scott} for a more detailed overview of history, significance, known results and references on this subject.

\section{Preliminaries and results}
\label{sec:2}

As $\mcl{L}(H)$ and $\mcl{U}(H)$ we denote the set of linear and unitary operators on $H$, respectively. 
The set $\{ |0\rangle, |1\rangle, .. , |d-1\rangle \}$ denotes an orthonormal basis in $H$. 

Let 
\ea{
	\tau = exp( 2\pi i \cdot \frac{d+1}{2d}), ~~~ \omega = \tau^2 = exp( 2\pi i \cdot \frac{1}{d} ),
}
so $\omega^d=\tau^{2d}=1$. Although $\tau^{d}=1$ if $d$ is odd. 

Consider the so-called \textit{clock and shift} unitary matrices $C,S \in \mcl{U}(H)$, defined by

\ea{
	\forall i \in [0..d-1]: ~~~ C |i\ra = \omega^i |i\ra, ~~~ S |i\ra = |i+1\ra,
}
where the sum in bra-kets is taken modulo $d$.

It can be seen that 
\ea{
	C^d = S^d = I ~\text{and}~ CS = \omega SC,
}
so the set $\{\omega^k C^iS^j \}_{k,i,j \in [0..d-1]}$ forms a group (generated by $C$ and $S$) of order $d^3$. Some authors already refer to this group as Weyl-Heisenberg group, though we will use a slightly different definition. Note that the set $\{ \frac{1}{\sqrt{d}}C^iS^j \}_{i,j \in [0..d-1]}$ forms an orthonormal basis for operator space $\mcl{L}(H)$ with the Hilbert-Schmidt inner product on it. 

It appears to be reasonable to consider the so-called \textit{displacement} (or \textit{translation}) unitary operators $T_k$, $\forall k \in \mbb{Z}^2$, defined by
\ea{
	T_k = T_{(k_1,k_2)} = \tau^{k_1 k_2} S^{k_1}C^{k_2},
}
so, in particular, $T_{(0,0)} = I$, $T_{(1,0)} = S$, $T_{(0,1)} = C$. 

We will also use the notation $\mbb{T}_k =  T_k \otimes T_k = T_k^{\otimes 2}$ throughout this paper. 

These operators satisfy the following properties \cite{Appleby}: 

$\forall a,b \in \mbb{Z}^2$:
\ea{
	T_a^\dag = T_{a}^{-1} = T_{-a},
	\\
	T_a T_b = \tau^{<a,b>} T_{a+b},
}
%$$ T_a T_b T_a^\dag = \omega^{<a,b>} T_b, $$
where the symplectic form $<a,b>$ is defined by
\ea{
	<a,b> = a_2b_1 - b_2a_1 = - <b,a>.
}

Operators $T_a$, $a \in \mbb{Z}^2$, are periodic in $d \times d$ lattice if $d$ is odd and in $2d \times 2d$ lattice if $d$ is even. That is
\ea{
	T_{a+db} = \epsilon^{<a,b>} T_{a},
}
where 
\ea{
	\epsilon = \tau^d = \begin{cases} 1, \text{ if } d \text{ is odd } \\ -1, \text{ if } d \text{ is even } \end{cases}
}

The set $\{T_a\}_{a\in \mbb{Z}^2}$ generates a group which we refer to as Weyl-Heisenberg group. 
This group is also generated by $\tau, C, S$. It has order $d^3$ if $d$ is odd and order $2d^3$ if $d$ is even. 
%This group is exactly the set $\{\tau^k T_a\}_{ k \in [1..2d], ~ a \in \mbb{Z}^2 } = \{\tau^k C^iS^j \}_{ k \in [1..2d], ~ i,j \in [0..d-1] }$, i.e. it's also generated by $\tau, C, S$. It has order $2d^3$ if $d$ is even and order $d^3$ if $d$ is odd. 

Note that the set $\{ \frac{1}{\sqrt{d}} T_{a} \}_{a \in [0..d-1]^2} $ is also an orthonormal basis for operator space $\mcl{L}(H)$. So the following proposition holds \cite{Renes}:
\begin{proposition}\label{prop:1}
For every unit-norm vector $|f\ra \in H$ the set $\{ T_i |f\ra\}_{i \in [0..d-1]^2}$ is a tight frame on $H$. That is 
\ea{
	\sum_{i \in [0..d-1]^2} T_i |f \ra  \la f| T_{-i} = d I
}
\end{proposition}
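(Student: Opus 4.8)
The plan is to exploit the fact recalled just before the statement, namely that $\{ \frac{1}{\sqrt d} T_a \}_{a \in [0..d-1]^2}$ is an orthonormal basis of $\mcl{L}(H)$ with respect to the Hilbert--Schmidt inner product. So it suffices to compute the operator $A := \sum_{i \in [0..d-1]^2} T_i |f\ra \la f| T_{-i}$ by expanding $|f\ra\la f|$ in this basis and tracking how conjugation by $T_i$ acts on each basis element $T_a$.

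First I would write $|f\ra\la f| = \sum_{a \in [0..d-1]^2} c_a \frac{1}{\sqrt d} T_a$ with $c_a = \frac{1}{\sqrt d}\Tr(T_a^\dag |f\ra\la f|) = \frac{1}{\sqrt d}\la f| T_{-a} |f\ra$; in particular $c_{(0,0)} = \frac{1}{\sqrt d}$. Then, using $T_a T_b = \tau^{<a,b>} T_{a+b}$ and $T_{-a} = T_a^{-1}$, a short computation gives $T_i T_a T_{-i} = \tau^{<i,a>-<a,i>} T_a = \omega^{-<a,i>} T_a$, since $<i,a> = -<a,i>$ and $\tau^2 = \omega$. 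Hence $A = \sum_a c_a \frac{1}{\sqrt d}\bigl(\sum_{i \in [0..d-1]^2} \omega^{-<a,i>}\bigr) T_a$.

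The remaining point is the character sum $\sum_{i \in [0..d-1]^2} \omega^{-<a,i>} = \bigl(\sum_{i_1=0}^{d-1} \omega^{-a_2 i_1}\bigr)\bigl(\sum_{i_2=0}^{d-1} \omega^{a_1 i_2}\bigr)$, which equals $d^2$ when $a \equiv (0,0) \pmod d$ and vanishes otherwise, since each factor is a geometric sum over a full period of a $d$-th root of unity. Therefore only the $a = (0,0)$ term survives, and $A = c_{(0,0)} \frac{1}{\sqrt d}\, d^2\, I = dI$, which is the claimed identity.

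I expect no serious obstacle here: the only mild subtlety is the bookkeeping of the symplectic-form signs and phases appearing in $T_i T_a T_{-i}$. An alternative, more structural route would be to note directly that $A$ commutes with every $T_a$ (using $T_a A T_{-a} = \sum_i T_{a+i}|f\ra\la f| T_{-(a+i)}$ together with the periodicity $T_{b+db'} = \epsilon^{<b,b'>} T_b$ and $\epsilon^2 = 1$ to re-index the sum), then invoke irreducibility of the Weyl--Heisenberg representation on $H$ and Schur's lemma to conclude $A = \lambda I$, fixing $\lambda = d$ by taking traces: $\Tr A = \sum_i \la f| T_{-i} T_i |f\ra = \sum_i \la f|f\ra = d^2$.
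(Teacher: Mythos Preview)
Your argument is correct. The paper itself does not give a proof of this proposition; it simply states it with a citation to \cite{Renes}, prefaced by the remark that $\{ \frac{1}{\sqrt{d}} T_{a} \}_{a \in [0..d-1]^2}$ is an orthonormal basis of $\mcl{L}(H)$, so your expansion of $|f\ra\la f|$ in this basis and the character-sum computation is exactly the argument the paper is gesturing at. Your alternative Schur's-lemma route is also fine and arguably cleaner.
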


Let $|f\rangle \in H$ be a unit-norm vector. The set $ \{ T_a |f\rangle\}_{a \in [0..d-1]^2}$ is called \textit{SIC of WH-type} if it's a SIC, that is $| \langle f | T_a | f \rangle | ^2 = \frac{1}{d+1}$ if $a \neq (0,0)$. In this case $|f\rangle$ is called \textit{fiducial vector of a SIC of WH-type}.

With the single exception of the Hoggar lines \cite{Hoggar} in dimension 8, every known SIC is of WH-type. 

As $H_{sym}$ we denote the symmetric subspace of $H \otimes H$, and $P_{sym}$ is the projector on $H_{sym}$ (see \cite{SymmChurch} for a review of the symmetric subspace properties). Also we denote $H_{asym} = H_{sym}^\perp$ and $P_{asym} = I - P_{sym}$. Note that $\text{dim}(H_{sym}) = \Tr(P_{sym}) = \frac{d(d+1)}{2}$. 

In this paper our attention is focused on the following properties of SICs, see \cite{Renes}.

 \begin{proposition}\label{prop:2}
Let $\{ |f_i\ra \in H \}_{i \in [1..d^2]}$ be a SIC in $H$. Then $\{ |f_i\ra|f_i\ra \in H \otimes H\}_{i \in [1..d^2]}$ is an equiangular tight frame on $H_{sym}$, that is 
\ea{
	\frac{1}{d^2} \sum_{i \in [1..d^2]} |f_i\ra |f_i\ra \la f_i | \la f_i | = \frac{2}{d(d+1)} P_{sym}
}
\end{proposition}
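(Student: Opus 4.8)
The plan is to prove the displayed operator identity directly; granting it, the remaining conditions for an equiangular tight frame on $H_{sym}$ are automatic. Indeed, $A := \tfrac{1}{d^2}\sum_i \Pi_i\otimes\Pi_i = \tfrac{2}{d(d+1)}P_{sym}$ (with $\Pi_i = |f_i\ra\la f_i|$) exhibits the frame operator $\sum_i |f_i\ra|f_i\ra\la f_i|\la f_i| = \tfrac{2d}{d+1}P_{sym}$ as a positive multiple of the identity on $H_{sym}$, which is exactly tightness on $H_{sym}$ (and forces $\{|f_i\ra|f_i\ra\}$ to span $H_{sym}$); meanwhile each $|f_i\ra|f_i\ra$ has unit norm and, for $i\neq j$, $\big|\la f_i|\la f_i|\cdot|f_j\ra|f_j\ra\big| = \big|\la f_i|f_j\ra\big|^2 = \tfrac{1}{d+1}$ is constant by the SIC condition. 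So the whole content is the identity $A=B$, where $B := \tfrac{2}{d(d+1)}P_{sym}\in\mcl{L}(H\otimes H)$. (This identity is precisely the statement that a SIC is a complex projective $2$-design.)

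The first step is the standard fact that a SIC spans $\mcl{L}(H)$: the Hilbert--Schmidt Gram matrix $\big(\Tr(\Pi_i\Pi_j)\big)_{i,j=1}^{d^2} = \tfrac{1}{d+1}(dI+J)$, with $J$ the $d^2\times d^2$ all-ones matrix, has eigenvalues $d$ and $\tfrac{d}{d+1}$ and is therefore invertible, so the $\Pi_i$ form a basis. Hence $\{\Pi_k\otimes\Pi_l\}_{k,l}$ spans $\mcl{L}(H)\otimes\mcl{L}(H) = \mcl{L}(H\otimes H)$, and it suffices to check $\Tr\big((\Pi_k\otimes\Pi_l)A\big) = \Tr\big((\Pi_k\otimes\Pi_l)B\big)$ for all $k,l\in[1..d^2]$. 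For the right-hand side I would use the standard identity $P_{sym} = \tfrac12(I+F)$, where $F$ is the flip on $H\otimes H$, together with $\Tr\big((X\otimes Y)F\big) = \Tr(XY)$, to get $\Tr\big((\Pi_k\otimes\Pi_l)P_{sym}\big) = \tfrac12\big(\Tr\Pi_k\,\Tr\Pi_l + \Tr(\Pi_k\Pi_l)\big)$.

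It then remains to evaluate two explicit sums using only $\Tr\Pi_i = 1$ and $\Tr(\Pi_i\Pi_j) = \tfrac{d\delta_{ij}+1}{d+1}$. On the left, $\Tr\big((\Pi_k\otimes\Pi_l)A\big) = \tfrac{1}{d^2}\sum_i \Tr(\Pi_k\Pi_i)\,\Tr(\Pi_l\Pi_i) = \tfrac{1}{d^2(d+1)^2}\sum_i (d\delta_{ki}+1)(d\delta_{li}+1)$, and expanding the product and using $\sum_i\delta_{ki}\delta_{li} = \delta_{kl}$, $\sum_i\delta_{ki} = \sum_i\delta_{li} = 1$, $\sum_i 1 = d^2$ collapses this to $\tfrac{d\,\delta_{kl}+d+2}{d(d+1)^2}$. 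On the right, the formula above gives $\Tr\big((\Pi_k\otimes\Pi_l)B\big) = \tfrac{2}{d(d+1)}\cdot\tfrac12\big(1 + \tfrac{d\delta_{kl}+1}{d+1}\big) = \tfrac{d\,\delta_{kl}+d+2}{d(d+1)^2}$, which agrees. A convenient sanity check along the way is $\Tr A = \tfrac{1}{d^2}\sum_i(\Tr\Pi_i)^2 = 1 = \tfrac{2}{d(d+1)}\Tr P_{sym} = \Tr B$, consistent with $\Tr P_{sym} = \tfrac{d(d+1)}{2}$.

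I do not anticipate a genuine obstacle: once the problem is reduced to testing against the $\Pi_k\otimes\Pi_l$, what remains is a short index manipulation. The one point that deserves care is that reduction step, i.e. invoking informational completeness so that a SIC (hence its pairwise tensor products) spans the operator space; everything downstream is routine. An equivalent variant, also resting on invertibility of the Gram matrix, avoids the explicit case check: expand an arbitrary $X = \sum_j c_j \Pi_j$ and derive $\sum_i \Tr(X\Pi_i)\Tr(Y\Pi_i) = \tfrac{d}{d+1}\big(\Tr(XY) + \Tr X\,\Tr Y\big)$, whence $\Tr\big((X\otimes Y)A\big) = \tfrac{1}{d(d+1)}\big(\Tr(XY) + \Tr X\,\Tr Y\big) = \Tr\big((X\otimes Y)B\big)$ for all $X,Y$; this is slightly longer but conceptually the same.
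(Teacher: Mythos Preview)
Your proof is correct. The paper itself does not prove this proposition: it is stated with attribution to \cite{Renes} and used as background, so there is no in-paper argument to compare against. Your route---informational completeness of the SIC (via invertibility of the Hilbert--Schmidt Gram matrix), reduction to testing against $\Pi_k\otimes\Pi_l$, and the identity $P_{sym}=\tfrac12(I+F)$---is the standard $2$-design argument and is carried out cleanly. The computations check out, including the Gram eigenvalues $d$ and $\tfrac{d}{d+1}$ and the final value $\tfrac{d\delta_{kl}+d+2}{d(d+1)^2}$ on both sides.

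One minor remark: the paper later records (Lemma~\ref{lem:Psym}) an alternative formula $P_{sym}=\tfrac{d+1}{2d}\big(I+\tfrac{1}{d+1}\sum_{a\neq 0}T_a\otimes T_{-a}\big)$, which would give a WH-flavoured variant of your computation; but your use of the swap operator is more elementary and does not require the Weyl--Heisenberg machinery, which is appropriate here since Proposition~\ref{prop:2} concerns arbitrary SICs, not only WH-type ones.
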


Proposition \ref{prop:2} states that the tensor square of a SIC is an equiangular tight frame on the symmetric subspace of $H \otimes H$. 
Note that there exists a multi-parametric family of ETFs of size $d^2$ in dimension $\frac{d(d+1)}{2}$ -- the dimension of the symmetric subspace, see \cite{Szollosi}. Though it's unlikely that some of those ETFs are obtained from a SIC tensor square. 

What's encouraging is that proposition \ref{prop:2} is almost sufficient for a SIC characterization due to the following 

\begin{proposition}\label{prop:3}
Suppose that the set of unit-norm vectors $\{ |f_i\ra \}_{i \in [1..d^2] } \subset H$ forms a tight frame in $H$ and also $\{ |f_i\ra|f_i\ra \}_{i \in [1..d^2] } \subset H\otimes H $ forms a tight frame on $H_{sym}$. Then $\{ |f_i\ra \}_{i \in [1..d^2] }$ is a SIC. 
\end{proposition}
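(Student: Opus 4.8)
The plan is to extract two scalar identities from the two tight-frame hypotheses and then run a tight Cauchy--Schwarz argument on the squared overlaps $a_{ij} = |\la f_i | f_j \ra|^2$.

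First I would pin down the frame constants. Taking the trace of $\sum_i |f_i\ra\la f_i| = cI_d$ and using $\la f_i|f_i\ra = 1$ forces $c = d$; likewise, taking the trace of $\sum_i |f_i\ra|f_i\ra\la f_i|\la f_i| = c' P_{sym}$ and using $\Tr P_{sym} = d(d+1)/2$ forces $c' = 2d/(d+1)$. Next I would compute $\sum_{i,j} a_{ij}$ by inserting the first frame identity: $\sum_{i,j} a_{ij} = \sum_i \la f_i | \bigl(\sum_j |f_j\ra\la f_j|\bigr) |f_i\ra = d \sum_i \la f_i|f_i\ra = d^3$. For the fourth powers, observe that $|\la f_i|f_j\ra|^4 = |\la f_i|\la f_i\, ( |f_j\ra|f_j\ra )|^2$ and that each $|f_i\ra|f_i\ra$ lies in $H_{sym}$, so $P_{sym}$ fixes it; inserting the second frame identity then gives $\sum_{i,j} a_{ij}^2 = \frac{2d}{d+1}\sum_i \la f_i|f_i\ra^2 = \frac{2d^3}{d+1}$.

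Now I would isolate the off-diagonal part. Since $a_{ii}=1$, the two identities read $\sum_{i\neq j} a_{ij} = d^3 - d^2$ and $\sum_{i\neq j} a_{ij}^2 = \frac{2d^3}{d+1} - d^2 = \frac{d^2(d-1)}{d+1}$. There are $d^2(d^2-1)$ off-diagonal pairs, and a direct check shows $(d^3-d^2)^2 = d^2(d^2-1)\cdot\frac{d^2(d-1)}{d+1}$; that is, the Cauchy--Schwarz inequality $\bigl(\sum_{i\neq j} a_{ij}\bigr)^2 \le d^2(d^2-1)\sum_{i\neq j}a_{ij}^2$ is saturated. Equality forces all off-diagonal $a_{ij}$ to coincide, necessarily with the value $\frac{d^3-d^2}{d^2(d^2-1)} = \frac{1}{d+1}$. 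Hence $|\la f_i|f_j\ra|^2 = \frac{d\delta_{ij}+1}{d+1}$ for all $i,j$, and together with $\sum_i \frac{1}{d}|f_i\ra\la f_i| = I_d$ this is precisely the definition of a SIC.

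The only genuine subtlety — and the single place where the hypotheses are used in an essential, non-bookkeeping way — is the observation that this Cauchy--Schwarz estimate is tight; this is exactly where the count $d^2$ of vectors and the value $2d/(d+1)$ of the symmetric-subspace frame constant enter. I would also make sure to verify cleanly that $P_{sym}\,|f_i\ra|f_i\ra = |f_i\ra|f_i\ra$ (immediate, since $|f_i\ra|f_i\ra$ is symmetric) and that nothing beyond the two stated frame conditions is invoked.
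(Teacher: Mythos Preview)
Your proof is correct. The paper itself does not prove this proposition at all; it simply remarks that it ``is just a restatement of the equivalence between SICs and spherical 2-designs obtained in \cite{Renes}.'' What you have written is precisely the standard Welch-bound computation underlying that equivalence: the two frame identities pin down $\sum_{i,j}|\la f_i|f_j\ra|^2$ and $\sum_{i,j}|\la f_i|f_j\ra|^4$, and equality in Cauchy--Schwarz on the off-diagonal terms then forces all overlaps to coincide at $1/(d+1)$. So you have supplied, in self-contained form, exactly the argument the paper defers to the literature for; there is no alternative proof in the paper to compare against.
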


Proposition \ref{prop:3} is just a restatement of the equivalence between SICs and spherical 2-designs obtained in \cite{Renes}. 

Together, propositions \ref{prop:1},\ref{prop:2} and \ref{prop:3} imply the following WH-type SIC criterion

\begin{theorem}\label{tm:sic-sym-crit}
The unit-norm vector $ |f\ra \in H$ is a fiducial vector of a SIC of WH-type if and only if  
\ea{
	\frac{1}{d^2} \sum_{i \in [0..d-1]^2} \mbb{T}_i |f \ra |f\ra \la f | \la f | \mbb{T}_{-i} = \frac{2}{d(d+1)} P_{sym}
} 
\end{theorem}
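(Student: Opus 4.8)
The plan is to deduce the theorem by applying Propositions \ref{prop:1}, \ref{prop:2} and \ref{prop:3} to the concrete system of vectors $\{|f_i\ra\}_{i\in[0..d-1]^2}$, where $|f_i\ra=T_i|f\ra$. The bridge between the two sides of the stated equivalence is the elementary identity $\mbb{T}_i|f\ra|f\ra=(T_i\otimes T_i)(|f\ra\otimes|f\ra)=(T_i|f\ra)\otimes(T_i|f\ra)=|f_i\ra|f_i\ra$, which rewrites the operator on the left-hand side of the claimed equation as the (normalized) frame operator of the tensor-squared system $\{|f_i\ra|f_i\ra\}_{i\in[0..d-1]^2}$.

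For the forward implication, assume $|f\ra$ is a fiducial vector of a SIC of WH-type. By the definition recalled just before Proposition \ref{prop:2}, this means exactly that $\{|f_i\ra\}_{i\in[0..d-1]^2}$ is a SIC in $H$, so Proposition \ref{prop:2} applies verbatim (after the harmless relabeling of $[0..d-1]^2$ by $[1..d^2]$) and gives $\frac{1}{d^2}\sum_i|f_i\ra|f_i\ra\la f_i|\la f_i|=\frac{2}{d(d+1)}P_{sym}$; substituting $|f_i\ra|f_i\ra=\mbb{T}_i|f\ra|f\ra$ yields precisely the asserted identity.

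For the converse, suppose the displayed identity holds for a unit-norm $|f\ra$. Proposition \ref{prop:1} already guarantees, for any unit vector, that $\{|f_i\ra\}_{i\in[0..d-1]^2}$ is a tight frame on $H$. The hypothesis, rewritten through the same substitution, says that the frame operator of $\{|f_i\ra|f_i\ra\}_{i\in[0..d-1]^2}$ is a positive multiple of $P_{sym}$ — in particular these vectors (each of them lying in $H_{sym}$) span $H_{sym}$ and form a tight frame on it; the value of the constant is in any case forced by taking traces, using that each $|f_i\ra|f_i\ra$ is a unit vector and $\Tr(P_{sym})=\frac{d(d+1)}{2}$. Proposition \ref{prop:3} then concludes that $\{|f_i\ra\}_{i\in[0..d-1]^2}$ is a SIC, i.e. $|f\ra$ is a fiducial vector of a SIC of WH-type.

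The argument is essentially bookkeeping, and the only place asking for a moment's care is the consistency of the index ranges with the WH-periodicity: since $\mbb{T}_a=T_a\otimes T_a$ picks up the phase $\epsilon^{<a,b>}\cdot\epsilon^{<a,b>}=\epsilon^{2<a,b>}=1$ under $a\mapsto a+db$ in both parities of $d$, the summand $\mbb{T}_i|f\ra|f\ra\la f|\la f|\mbb{T}_{-i}$ genuinely depends only on $i\bmod d$, so the sum over $i\in[0..d-1]^2$ is an average over exactly one period of the orbit and matches the $d^2$ terms in Propositions \ref{prop:2} and \ref{prop:3}. Beyond this I expect no real obstacle.
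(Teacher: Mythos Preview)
Your proof is correct and follows exactly the route the paper indicates: the paper simply states that Propositions \ref{prop:1}, \ref{prop:2} and \ref{prop:3} together imply the criterion, and you have spelled out this deduction (with the bookkeeping identity $\mbb{T}_i|f\ra|f\ra=|f_i\ra|f_i\ra$) in full. There is no substantive difference in approach.
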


This criterion is essentially the main starting point of our research. \\

By Naimark's theorem \cite{Casazza} every Parseval frame of size $n$ in dimension $m$ is unitary equivalent to a projection of some orthonormal basis of $n$-dimensional space onto its subspace of dimension $m$. Note that if $\{ |v_i\ra \} \in H$ is a tight frame with the frame bound $B$ then $\{ \frac{1}{\sqrt{B}}|v_i\ra \} $ is a Parseval frame. Thus we can deduce the following corollary 

\begin{corollary}
Let $ |f\ra \in H$ be a fiducial vector of a SIC of WH-type. Then there exists an orthonormal basis $\{ |b_i\rangle \}_{i\in[0..d-1]^2}$ of $H \otimes H$ such that $\forall i\in[0..d-1]^2$: 
\ea{
	\sqrt{\frac{d+1}{2d}}\mbb{T}_i |f \ra |f\ra = P_{sym} | b_i \rangle
}
\end{corollary}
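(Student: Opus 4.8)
The plan is to invoke Naimark's theorem exactly in the form quoted in the excerpt. First I would normalize the tight frame: by Theorem~\ref{tm:sic-sym-crit}, if $|f\ra$ is a fiducial vector of a SIC of WH-type, then the $d^2$ vectors $\mbb{T}_i|f\ra|f\ra$ satisfy $\frac{1}{d^2}\sum_i \mbb{T}_i|f\ra|f\ra\la f|\la f|\mbb{T}_{-i} = \frac{2}{d(d+1)}P_{sym}$, so these vectors form a tight frame on $H_{sym}$ with frame bound $B = \frac{2d}{d+1}$ (combining the $d^2$ and $\frac{2}{d(d+1)}$ factors). Rescaling each vector by $1/\sqrt{B} = \sqrt{\frac{d+1}{2d}}$ produces a Parseval frame $\{\sqrt{\frac{d+1}{2d}}\,\mbb{T}_i|f\ra|f\ra\}_{i\in[0..d-1]^2}$ on $H_{sym}$, since the frame operator of the rescaled system is $P_{sym}$, the identity on $H_{sym}$.

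Next I would apply Naimark's theorem: a Parseval frame of $n = d^2$ vectors in a space of dimension $m = \dim H_{sym} = \frac{d(d+1)}{2}$ is unitarily equivalent to the image of an orthonormal basis of an ambient $n$-dimensional space under an orthogonal projection onto an $m$-dimensional subspace. Here the ambient space is naturally $H\otimes H$, which has dimension $d^2 = n$, and $H_{sym}$ is an $m$-dimensional subspace of it with projector $P_{sym}$. Concretely, Naimark gives an orthonormal basis $\{|b_i'\ra\}_{i\in[0..d-1]^2}$ of some $d^2$-dimensional space and a unitary identification of that space with $H\otimes H$ carrying the projection onto the relevant subspace to $P_{sym}$; transporting along this unitary yields an orthonormal basis $\{|b_i\ra\}$ of $H\otimes H$ with $P_{sym}|b_i\ra$ equal to the $i$-th Parseval frame vector, i.e. $P_{sym}|b_i\ra = \sqrt{\frac{d+1}{2d}}\,\mbb{T}_i|f\ra|f\ra$, which is exactly the claim.

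There is essentially no hard step here — the corollary is a direct packaging of Theorem~\ref{tm:sic-sym-crit} with the standard Naimark dilation. The only point requiring a word of care is that the ambient space in Naimark's theorem has dimension equal to the number of frame vectors, $d^2$, and that this coincides with $\dim(H\otimes H)$; this is what lets us phrase the result with $H\otimes H$ itself (rather than an abstract $d^2$-dimensional space) as the dilation space, and it is precisely the feature that makes the statement nontrivial and useful as a search-space description. One should also note that the basis $\{|b_i\ra\}$ is far from unique — any unitary on $H\otimes H$ fixing $H_{sym}$ pointwise, i.e. acting only on $H_{asym}$, produces another valid basis — and the subsequent sections of the paper exploit exactly this freedom to single out a WH-covariant choice of $\{|b_i\ra\}$.
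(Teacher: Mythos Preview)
Your proposal is correct and matches the paper's own justification essentially line for line: the paper derives the corollary in one sentence by noting that rescaling a tight frame by $1/\sqrt{B}$ gives a Parseval frame and then invoking Naimark's theorem, exactly as you do. Your additional remarks on the dimension coincidence $d^2=\dim(H\otimes H)$ and on the non-uniqueness of $\{|b_i\ra\}$ are accurate and anticipate the paper's subsequent discussion.
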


For a particular SIC such basis $\{ | b_i \rangle \}$ is not unique. A rather natural question arise. Can we find such basis with some additional nice structure? In particular, can it be WH group covariant? It turns out that the answer is yes. 

In this paper we introduce the notion of WH-type basis of $H\otimes H$ (def. \ref{def:wh-basis}), describe the set of all such bases (theorems \ref{tm:bwh-odd}, \ref{tm:bwh-even}), and prove that for every SIC its tensor square is a projection of some WH-type basis onto $H_{sym}$ (theorem \ref{tm:main}). So the set of WH-type bases can be used as a search space for SIC solutions, see section \ref{sec:search-space}. Also we show that a particular WH-type basis is close to a SIC solution in some structural sense (prop. \ref{prop:0F0}). Finally, as a collateral result, we found a geometric construction of a SIC-related symmetric tight fusion frames (STFFs) of ranks $(d\pm1)/2$ in dimension $d$ constructed for odd $d$ in \cite{STFF}, see section \ref{sec:STFF}. 

\section{A note on the Schmidt decomposition of a symmetric vector}
\label{sec:srank}

In this paper we work a lot with the symmetric subspace so it's useful to keep in mind the following property of a symmetric vectors. 

Recall that the Schmidt rank of a vector $|v\ra \in H\otimes H$, which we denote as $srank(|v\ra)$, is the number of non-zero coefficients in the Schmidt decomposition of $|v\ra$. That is $|v\ra = \sum_{i=1}^r c_i |a_i\ra \otimes |b_i\ra$, $c_i>0$, $\la a_k | a_l \ra = \la b_k | b_l \ra = \delta_{kl}$, $r = srank(|v\ra)$.  However, for a vector from the symmetric subspace $|v\ra \in H_{sym}$ one can define the $symmetric ~rank$ as the minimum number $r$ such that $|v\ra = \sum_{i=1}^r c_i |a_i\ra \otimes |a_i\ra$, where $|a_i\ra$ are not necessary orthogonal to each other and $c_i$ are any complex numbers. But in fact, this rank coincides with the Schmidt rank. 

\begin{theorem}\label{tm:symm}
For a vector $|v\ra \in H_{sym}$ its symmetric rank coincides with its Schmidt rank. Moreover, if $r = srank(|v\ra)$, then there exists a symmetric Schmidt decomposition
\ea{
	|v\ra = \sum_{i=1}^r c_i |a_i\ra \otimes |a_i\ra,
}
where $c_i >0$, $\la a_k | a_l \ra = \delta_{kl}$. If we rearrange summands such that $c_i \le c_j$ for $i < j$ then this decomposition is unique up to a real orthogonal matrix $U$ that preserves corresponding diagonal matrix $D = diag\{c_1,..,c_r\}$, that is $D = U^TDU$.

\end{theorem}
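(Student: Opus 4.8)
The plan is to pass to coordinates, where the statement becomes the Autonne--Takagi factorization of complex symmetric matrices. Fix the orthonormal basis $\{|0\ra,\dots,|d-1\ra\}$ and identify $|v\ra=\sum_{i,j}M_{ij}\,|i\ra|j\ra$ with its coordinate matrix $M=(M_{ij})\in\mbb{C}^{d\times d}$. Under this identification the Schmidt decomposition of $|v\ra$ is the singular value decomposition of $M$, so $srank(|v\ra)=\mathrm{rank}(M)$; the condition $|v\ra\in H_{sym}$ is equivalent to $M$ being symmetric, $M^{T}=M$ (not Hermitian); and a decomposition $|v\ra=\sum_{i=1}^{r}c_{i}\,|a_{i}\ra\otimes|a_{i}\ra$ corresponds exactly to $M=\sum_{i=1}^{r}c_{i}\,a_{i}a_{i}^{T}$, where $a_{i}\in\mbb{C}^{d}$ is the coordinate vector of $|a_{i}\ra$. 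Thus the symmetric rank of $|v\ra$ is the least number of symmetric rank-one matrices $c_{i}a_{i}a_{i}^{T}$ summing to $M$. Since every symmetric decomposition is in particular a decomposition into rank-one matrices, the inequality $\mathrm{rank}(M)\le(\text{symmetric rank})$ is immediate; the content of the theorem is the reverse inequality together with the orthonormality and the uniqueness claim.

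All of this follows from the Autonne--Takagi factorization: every complex symmetric $M$ can be written $M=U\Sigma U^{T}$ with $U$ unitary and $\Sigma=\mathrm{diag}(\sigma_{1},\dots,\sigma_{d})$, $\sigma_{k}\ge 0$, the number of nonzero $\sigma_{k}$ being $\mathrm{rank}(M)$. Writing $u_{k}$ for the columns of $U$ and keeping only the nonzero $\sigma_{k}$, we get $M=\sum_{k}\sigma_{k}u_{k}u_{k}^{T}$ with $\{u_{k}\}$ orthonormal and exactly $r=\mathrm{rank}(M)=srank(|v\ra)$ terms; translating back yields a symmetric Schmidt decomposition $|v\ra=\sum_{k}c_{k}\,|a_{k}\ra\otimes|a_{k}\ra$ with $c_{k}=\sigma_{k}>0$ and $\la a_{k}|a_{l}\ra=\delta_{kl}$, and in particular shows $(\text{symmetric rank})\le srank(|v\ra)$, proving equality.

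For completeness I would include a short proof of the Autonne--Takagi factorization (otherwise one cites it as a standard fact of matrix analysis), by induction on $d$. If $M=0$ there is nothing to prove. Otherwise $M^{\dagger}M=\overline{M}M$ is Hermitian and positive semidefinite, and as $x^{\dagger}\overline{M}Mx=\|Mx\|^{2}$ it has a unit eigenvector $x$ with eigenvalue $\lambda>0$. Put $z:=Mx/\sqrt{\lambda}$, a unit vector; the symmetry $M^{T}=M$ yields $M\overline{z}=\sqrt{\lambda}\,\overline{x}$, hence $M(x+e^{i\phi}\overline{z})=\sqrt{\lambda}\,e^{i\phi}\,\overline{(x+e^{i\phi}\overline{z})}$ for every phase $e^{i\phi}$. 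Choosing $\phi$ so that $w:=x+e^{i\phi}\overline{z}\neq 0$ (at most one $\phi$ is excluded) and rescaling $w$ to a unit vector with an appropriate phase, one obtains a unit vector $u$ and $\sigma_{1}>0$ with $Mu=\sigma_{1}\overline{u}$. Setting $a_{1}:=\overline{u}$, so that $M\overline{a_{1}}=\sigma_{1}a_{1}$, the matrix $M_{1}:=M-\sigma_{1}a_{1}a_{1}^{T}$ is again symmetric and satisfies $a_{1}^{\dagger}M_{1}=0$ (using $a_{1}^{\dagger}M=\sigma_{1}a_{1}^{T}$, again a consequence of $M^{T}=M$); therefore $M_{1}$ is supported on the $(d-1)$-dimensional subspace $a_{1}^{\perp}$, where in any orthonormal basis it is again complex symmetric, and the inductive hypothesis applies. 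I expect this extraction of a single vector $a_{1}$ with $M\overline{a_{1}}$ proportional to $a_{1}$, together with the verification that $M_{1}$ is complex symmetric and annihilated by $a_{1}$ on both sides, to be the one genuinely substantive step; the coordinate translation and the uniqueness bookkeeping are routine.

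It remains to address uniqueness. Suppose $M=ADA^{T}=BDB^{T}$ with $A=[a_{1}|\cdots|a_{r}]$ and $B=[b_{1}|\cdots|b_{r}]$ having orthonormal columns, $D=\mathrm{diag}(c_{1},\dots,c_{r})$, $c_{i}>0$, and the $c_{i}$ ordered so that $c_{i}\le c_{j}$ for $i<j$. The multiset $\{c_{1},\dots,c_{r}\}$ equals the multiset of nonzero singular values of $M$, hence is the same for both decompositions, so they share the same $D$. Since $\mathrm{rank}(M)=r$, the columns of $A$ and of $B$ are two orthonormal bases of the column space of $M$, so $B=AW$ for a unique unitary $W\in\mbb{C}^{r\times r}$. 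Then $ADA^{T}=AWDW^{T}A^{T}$, and multiplying on the left by $A^{\dagger}$ and on the right by $\overline{A}$ (with $A^{\dagger}A=A^{T}\overline{A}=I_{r}$) gives $WDW^{T}=D$; right-multiplying by $\overline{W}$ turns this into $WD=D\overline{W}$, i.e.\ $W_{ij}c_{j}=c_{i}\overline{W_{ij}}$ for all $i,j$. For $i=j$ this forces $W_{ii}\in\mbb{R}$; for $i\neq j$, comparing moduli forces $W_{ij}=0$ unless $c_{i}=c_{j}$, and in the latter case the same relation forces $W_{ij}\in\mbb{R}$. Hence $W$ is a real matrix, and being real and unitary it is real orthogonal; moreover it commutes with $D$, i.e.\ $W^{T}DW=D$. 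This is precisely the asserted ambiguity, the two frames being related by $b_{i}=\sum_{j}W_{ji}a_{j}$.
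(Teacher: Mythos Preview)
Your proof is correct and, like the paper's, ultimately rests on the Autonne--Takagi factorization; the difference is in the packaging. The paper works intrinsically: it first bounds Schmidt rank by symmetric rank via a partial-trace computation, then takes an ordinary Schmidt decomposition $|v\ra=\sum c_i|a_i\ra|b_i\ra$, uses equality of the two reduced states to get $\mathrm{span}\{|a_i\ra\}=\mathrm{span}\{|b_i\ra\}$, rewrites $|v\ra=\sum_{ij}c_{ij}|a_i\ra|a_j\ra$ with a symmetric $r\times r$ coefficient matrix, and applies Autonne--Takagi to that small matrix. You instead pass to the full $d\times d$ coordinate matrix $M$ at once and apply Autonne--Takagi directly, which collapses the paper's preliminary steps into the single observation $srank(|v\ra)=\mathrm{rank}(M)$; the uniqueness arguments then coincide, both arriving at the entrywise relation $W_{ij}c_j=c_i\overline{W_{ij}}$ and concluding that $W$ is real orthogonal and commutes with $D$. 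Your route is shorter; the paper's has the minor virtue of never leaving the $r$-dimensional support of $|v\ra$. One small slip in your optional inductive proof of Autonne--Takagi: from $a_1^\dagger M_1=0$ and $M_1^T=M_1$ you get $M_1\overline{a_1}=0$, not $M_1 a_1=0$, so the clean reduction is the congruence $W^TM_1W$ with first column $\overline{a_1}$, rather than ``restriction to $a_1^\perp$'' in the Hermitian sense; this is a harmless bookkeeping point, and in any case you correctly note that Autonne--Takagi can simply be cited.
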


\begin{proof}
Let $|v\ra = \sum_{i=1}^r c_i |a_i\ra \otimes |a_i\ra$ for some complex $c_i$ and some vectors $|a_i\ra \in H$. Recall that the Schmidt rank of $|v\ra$ is the same as the rank of the partial trace (over any subsystem) of $|v\ra\la v|$. Let's calculate 
\ea{
\text{Tr}_1 |v\ra\la v| 
= \text{Tr}_1 \bigg( \sum_{i=1}^r c_i |a_i\ra \otimes |a_i\ra \sum_{j=1}^r \bar{c_i} \la a_i| \otimes \la a_i | \bigg)
=
\nn
= \text{Tr}_1 \bigg( \sum_{i=1}^r  \sum_{j=1}^r   c_i \bar{c_j} |a_i\ra \la a_j| \otimes |a_i\ra \la a_j| \bigg) 
= \sum_{i=1}^r  \sum_{j=1}^r   c_i \bar{c_j} \Tr \big( |a_i\ra \la a_j| \big)  |a_i\ra \la a_j|
}

It's easy to see that this operator preserves subspace $H_a = span\{ |a_i\ra \} \subset H$ and it's equal to 0 on $H_a^\perp$. Hence its rank can't be more than $r$. This proves that Schmidt rank is not bigger than symmetric rank. 

Now let $| v \ra \in H_{sym}$ has the Schmidt rank $r$. That is $|v\ra = \sum_{i=1}^r c_i |a_i\ra \otimes |b_i\ra$, $c_i>0$, $\la a_k | a_l \ra = \la b_k | b_l \ra = \delta_{kl}$. For a symmetric vector both reduced states of $| v \ra \la v |$ coincide, that is $\text{Tr}_1 |v\ra\la v| = \text{Tr}_2 |v\ra\la v|$. This implies 
\ea{
	\sum_{i=1}^r  c_i^2 |a_i\ra \la a_i | =  \sum_{i=1}^r  c_i^2 |b_i\ra \la b_i |, 
}
hence $span\{ |a_i\ra \} = span\{ |b_i\ra \} $. If $c_i$ were all different we would have $|a_i\ra \la a_i | = |b_i\ra \la b_i |$ and a required symmetric decomposition follows easily. 
In general situation we can represent vectors $|b_i\ra$ as linear combinations of vectors from the set $\{|a_i\ra\}$. Hence vector $| v \ra$ can be represented as 
\ea{ 
	| v \ra = \sum_{i=1}^r  \sum_{j=1}^r   c_{ij} |a_i\ra | a_j \ra 
}
for some complex numbers $c_{ij}$. Note that $\{ |a_i\ra | a_j \ra \}_{i,j \in [1..r]}$ is a basis of a subspace of rank $r^2$. Since $| v \ra$ is symmetric we have $U_{swap} | v \ra = | v \ra$, where $U_{swap}| \phi \ra | \psi \ra = | \psi \ra | \phi \ra$ is the swap operator. This implies that $c_{ij} = c_{ji}$, $\forall i,j \in [1..r]$. So the corresponding $r \times r$ matrix $M$ with entries $c_{ij}$ is symmetric. By Autonne-Takagi factorization \cite{HornJohnson} there is a unitary $U$ (with the entries $u_{ij}$) such that $U^TMU = D$, where $D$ is the diagonal matrix with a real non-negative entries. Hence substitutions $|a_i\ra = \sum_{j=1}^r u_{ij}|a_j^\prime\ra$ give a symmetric decomposition (which is also a Schmidt decomposition) -- with the symmetric rank not bigger than $r$.
 
Finally, let $|v\ra = \sum_{i=1}^r \alpha_i |a_i\ra |a_i\ra = \sum_{i=1}^r \beta_i |b_i\ra |b_i\ra$, $0 < \alpha_i \le \alpha_j, 0 < \beta_i \le \beta_j$, for $i <j$, and $\la a_k | a_l \ra = \la b_k | b_l \ra = \delta_{kl}$. Again we can deduce that $span\{ |a_i\ra \} = span\{ |b_i\ra \} $ so there is a unitary $U$ with the entries $u_{ij}$ such that $|b_i\ra = \sum_{j=1}^r u_{ij}|a_j\ra$. If we denote $A = diag\{\alpha_1,..,\alpha_r\}$ and $B = diag\{\beta_1,..,\beta_r\}$ then we have 
\ea{ 
	A = U^T B U
}
Hence $ \bar{U} A = B U$ which gives $\bar{u}_{ij} \alpha_j = u_{ij} \beta_i$, $\forall i,j \in [1..r]$. This implies $u_{ij}=0$ if $\alpha_j \neq \beta_i$ and $u_{ij} \in \mbb{R}$ if $\alpha_j = \beta_i$. Hence $U$ is a real orthogonal matrix. So it must be $A=B$. Also we can deduce that $U$ is block-diagonal where the blocks correspond to a subsets of indices where the numbers $\alpha_i, \alpha_{i+1}, .. , \alpha_{i+t} $ are equal. 

\end{proof}

\section{WH-type bases of $H\otimes H$}
\label{sec:whb}

\begin{definition}\label{def:wh-basis}
A WH-type basis is an orthonormal basis of $H\otimes H$ of the form \\
$\{ \mbb{T}_i |b\ra \}_{i \in [0..d-1]^2}$. 
That is $| b \ra \in H\otimes H$, $\la b|b\ra=1$ and $\forall i \in [0..d-1]^2, i\neq (0,0)$: 
\ea{
	\la b| \mbb{T}_i | b \ra = 0
}
Vector $|b\ra $ is called a fiducial basis vector. 

As $\mbb{B}_{WH}$ we denote the set of fiducial basis vectors of all possible WH-type bases.
\end{definition}

Our next aim is to describe the set $\mbb{B}_{WH}$. 

First of all, let's show that this set is not empty. Consider the vector $|0\ra F |0\ra$:
\ea{
	|0\ra F |0\ra = |0\ra \otimes F |0\ra = \frac{1}{\sqrt{d}} \sum_{i \in [0..d-1]} |0i\ra
}

Here $F$ is the discrete Fourier transform matrix 
\ea{
	F = \frac{1}{\sqrt{d}} \sum_{i,j \in [0..d-1]} \omega^{ij} |i\ra\la j| 
}

Matrix $F$ is an element of the automorphism group of the WH group since it interchanges the shift and clock matrices, that is $F^{-1}CF=S, ~ F^{-1}SF=C^{-1}$. In general it acts on the WH group by the following formula
\ea{ 
	F^{-1} T_{(i_1,i_2)} F = T_{(i_2,-i_1)} 
}

\begin{proposition} \label{prop:0F0-bwh}
\ea{
	|0\ra F |0\ra \in \mbb{B}_{WH}
}
\end{proposition}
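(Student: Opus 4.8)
The plan is to verify directly the two defining conditions of Definition \ref{def:wh-basis} for the candidate fiducial basis vector $|b\ra := |0\ra F|0\ra = |0\ra \otimes F|0\ra$. Normalization is immediate: since $F$ is unitary, $\la b | b \ra = \la 0 | 0 \ra \cdot \la 0 | F^\dag F | 0 \ra = 1$. The real content is to show $\la b | \mbb{T}_i | b \ra = 0$ for every $i \in [0..d-1]^2$ with $i \neq (0,0)$.

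First I would exploit the tensor structure $\mbb{T}_i = T_i \otimes T_i$ to factor
\ea{
	\la b | \mbb{T}_i | b \ra = \la 0 | T_i | 0 \ra \cdot \la 0 | F^\dag T_i F | 0 \ra .
}
For the second factor I would invoke the covariance identity $F^{-1} T_{(i_1,i_2)} F = T_{(i_2,-i_1)}$ recorded above together with $F^{-1} = F^\dag$, rewriting it as $\la 0 | T_{(i_2,-i_1)} | 0 \ra$; thus the whole quantity is a product of two diagonal matrix elements of displacement operators at $|0\ra$. The next step is to compute $\la 0 | T_{(k_1,k_2)} | 0 \ra$ explicitly: since $T_{(k_1,k_2)} = \tau^{k_1 k_2} S^{k_1} C^{k_2}$ and $C^{k_2}|0\ra = |0\ra$, one gets $T_{(k_1,k_2)}|0\ra = \tau^{k_1 k_2}|k_1 \bmod d\ra$, so for $k_1 \in [0..d-1]$ the matrix element $\la 0 | T_{(k_1,k_2)} | 0 \ra$ vanishes unless $k_1 = 0$ (in which case it equals $1$). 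Applying this to the first factor ($k_1 = i_1$) kills it unless $i_1 = 0$; applying it to the second factor (with $(k_1,k_2) = (i_2,-i_1)$) kills it unless $i_2 = 0$. Hence for $i \in [0..d-1]^2$ the product is nonzero only when $i_1 = i_2 = 0$, i.e. $i = (0,0)$, which is exactly the claim, so $|0\ra F|0\ra \in \mbb{B}_{WH}$.

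There is no genuine obstacle here; the argument is a short direct computation, and the only points one must not overlook are the use of $F^{-1} = F^\dag$ in the covariance step and the observation that the relevant indices $i_1$, $i_2$ lie in $[0..d-1]$ so "divisible by $d$" reduces to "equal to $0$". If one wishes to be fully explicit that the two verified conditions really do yield an orthonormal basis (as Definition \ref{def:wh-basis} asserts), one can note that by the Weyl relations $\mbb{T}_{-i}\mbb{T}_j = \omega^{-\la i,j\ra}\mbb{T}_{j-i}$ and that $\mbb{T}_a$ is periodic modulo $d$ (since $\mbb{T}_{a+db} = \epsilon^{2\la a,b\ra}\mbb{T}_a = \mbb{T}_a$), so $\la \mbb{T}_i b | \mbb{T}_j b \ra = \omega^{-\la i,j\ra}\la b | \mbb{T}_{j-i} | b \ra = \delta_{ij}$ for $i,j \in [0..d-1]^2$.
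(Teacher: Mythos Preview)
Your proof is correct and uses the same key ingredients as the paper's proof: the tensor factorization of $\mbb{T}_i$ together with the Fourier covariance $F^{-1}T_{(i_1,i_2)}F = T_{(i_2,-i_1)}$. The only cosmetic difference is that the paper computes the full orbit $\mbb{T}_i|0\ra F|0\ra = |i_1\ra F|i_2\ra$ and observes this is manifestly an orthonormal basis, whereas you verify the equivalent inner-product condition $\la b|\mbb{T}_i|b\ra = 0$ from Definition~\ref{def:wh-basis} directly.
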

\begin{proof}
Let's calculate $\mbb{T}_i |0\ra F |0\ra$:
\ea{
	\mbb{T}_i |0\ra F |0\ra = T_i \otimes T_i  \cdot |0\ra \otimes F |0\ra = T_i |0\ra \otimes T_iF|0\ra =  
	\nn
	= T_{(i_1,i_2)} |0\ra \otimes FT_{(i_2,-i_1)}|0\ra
	=  |i_1\ra \otimes F |i_2\ra = | i_1\ra F |i_2\ra
}
It's easy to see that $\{| i_1\ra F |i_2\ra\}_{(i_1,i_2) \in [0..d-1]^2}$ is a basis of $H \otimes H$, hence $|0\ra F |0\ra \in \mbb{B}_{WH}$. 
\end{proof}

What is fascinating, as we will see in prop. \ref{prop:0F0}, is that this basis $\{| i_1\ra F | i_2\ra\}$, when projected onto the symmetric subspace $H_{sym}$, gives an equiangular tight frame (on the symmetric subspace) such that each vector from it has the Schmidt rank 2 ! If instead this Schmidt rank were 1, then by theorem \ref{tm:sic-sym-crit} we would have a SIC solution (see section \ref{sec:search-space} for more details). Note that general vector in $H \otimes H$ has the Schmidt rank $d$. So in some structural sense $|0\ra F |0\ra$ generates a structure which is close to a SIC solution. 

Let's go to the details. Since $\mbb{T}_i$ commutes with $P_{sym}$ we can deduce the following %corollary of \ref{tm:symm}

\begin{proposition}
Let $| b \ra \in \mbb{B}_{WH}$. Then every element of $\{ P_{sym}\mbb{T}_i |b\ra \}_{i \in [0..d-1]^2}$, has the same Schmidt rank. 
\end{proposition}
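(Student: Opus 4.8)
The plan is to exploit the fact that the displacement operators $T_i$ are unitary, hence $\mbb{T}_i = T_i \otimes T_i$ is unitary on $H \otimes H$, together with the stated commutation $\mbb{T}_i P_{sym} = P_{sym} \mbb{T}_i$ (which holds because the swap operator commutes with $T_i\otimes T_i$). The key observation is that the Schmidt rank of a vector $|w\ra \in H\otimes H$ is a \emph{local-unitary invariant}: it equals the rank of the reduced operator $\Tr_2 |w\ra\la w|$, and applying $T_i\otimes T_i$ conjugates this reduced operator by the unitary $T_i$, which does not change its rank.

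First I would fix $|b\ra \in \mbb{B}_{WH}$ and an arbitrary index $i \in [0..d-1]^2$, and set $|v_0\ra = P_{sym}|b\ra$ and $|v_i\ra = P_{sym}\mbb{T}_i|b\ra$. Using $P_{sym}\mbb{T}_i = \mbb{T}_i P_{sym}$ we get $|v_i\ra = \mbb{T}_i |v_0\ra = (T_i\otimes T_i)|v_0\ra$, so $|v_i\ra$ is obtained from $|v_0\ra$ by the local unitary $T_i$ applied to each factor. Next I would compute the reduced state: $\Tr_2 |v_i\ra\la v_i| = \Tr_2\big( (T_i\otimes T_i)|v_0\ra\la v_0|(T_i^\dag\otimes T_i^\dag)\big) = T_i\big(\Tr_2 |v_0\ra\la v_0|\big)T_i^\dag$, where I pull the first tensor factor out of the partial trace and use $\Tr(T_i \rho T_i^\dag) = \Tr(\rho)$ for the second. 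Since $T_i$ is invertible, $\operatorname{rank}\big(\Tr_2 |v_i\ra\la v_i|\big) = \operatorname{rank}\big(\Tr_2 |v_0\ra\la v_0|\big)$, and by the standard characterization (recalled in Section~\ref{sec:srank}) this common rank is $srank(|v_i\ra) = srank(|v_0\ra)$. As $i$ was arbitrary, every element of $\{P_{sym}\mbb{T}_i|b\ra\}_{i\in[0..d-1]^2}$ has the same Schmidt rank, namely that of $P_{sym}|b\ra$.

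There is essentially no hard part here; the only thing to be a little careful about is the edge case where $P_{sym}|b\ra = 0$, but this cannot occur: if $P_{sym}|b\ra=0$ then $P_{sym}\mbb{T}_i|b\ra = \mbb{T}_i P_{sym}|b\ra = 0$ for all $i$, forcing $P_{sym}$ to annihilate the whole basis $\{\mbb{T}_i|b\ra\}$ and hence to be zero, contradicting $\dim H_{sym} = d(d+1)/2 > 0$. So the claim is vacuously consistent and in fact the common Schmidt rank is a well-defined positive integer. (The one genuinely substantive fact being used — that Schmidt rank equals the rank of the partial trace and is thus a local-unitary invariant — is already available from the discussion preceding Theorem~\ref{tm:symm}.)
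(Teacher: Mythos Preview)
Your proof is correct and follows essentially the same approach as the paper: both use $P_{sym}\mbb{T}_i = \mbb{T}_i P_{sym}$ to write $P_{sym}\mbb{T}_i|b\ra = (T_i\otimes T_i)P_{sym}|b\ra$ and then invoke local-unitary invariance of the Schmidt rank. The only cosmetic difference is that the paper argues the invariance by exhibiting an explicit symmetric Schmidt decomposition of $\mbb{T}_k|v\ra$ via Theorem~\ref{tm:symm}, whereas you use the equivalent partial-trace rank characterization.
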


\begin{proof}
Let $|v\ra = P_{sym} |b\ra$ and $r = srank(| v \ra)$. By \ref{tm:symm} we have $|v\ra = \sum_{i=1}^r c_i |a_i\ra |a_i\ra$, where $c_i >0$, $\la a_k | a_l \ra = \delta_{kl}$. Then $P_{sym} \mbb{T}_k |b\ra = \mbb{T}_k P_{sym} |b\ra = \mbb{T}_k |v\ra = \sum_{i=1}^r c_i T_k |a_i\ra \otimes T_k |a_i\ra$. 

%Note that the use of \ref{tm:symm} is not mandatory for the proof since $\sum_{i=1}^r c_i T_k |a_i\ra \otimes T_k |b_i\ra$ is also a Schmidt decomposition if $\{|a_i \ra\}, \{ |b_i\ra \}$ are bases. Though the symmetric Schmidt decomposition is a nicer thing to have and work with. 
% a bit more explicit. 
\end{proof}

\begin{proposition}
\ea{
	srank( P_{sym} |0\ra F|0\ra) = 2
}
\end{proposition}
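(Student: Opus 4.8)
The plan is to compute the reduced density operator $\mathrm{Tr}_1\big(P_{sym}|0\rangle F|0\rangle\langle 0|F^\dagger\langle 0|P_{sym}\big)$ and show it has rank exactly $2$, since by Theorem \ref{tm:symm} the symmetric (Schmidt) rank of a symmetric vector equals the rank of either partial trace. First I would write $|0\rangle F|0\rangle = \frac{1}{\sqrt d}\sum_{j}|0\rangle|j\rangle$ explicitly, apply $P_{sym} = \frac12(I + U_{swap})$ to get
\ea{
	P_{sym}|0\rangle F|0\rangle = \frac{1}{2\sqrt d}\sum_{j\in[0..d-1]}\big(|0\rangle|j\rangle + |j\rangle|0\rangle\big).
}
This vector is manifestly a linear combination of tensors built from the two-dimensional span $W = \mathrm{span}\{|0\rangle, \sum_j|j\rangle\}$ (equivalently, the span of $|0\rangle$ and $\sqrt d\,F|0\rangle$), so its Schmidt rank is at most $2$; the only real content is ruling out rank $\le 1$. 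To finish the upper bound cleanly one may just exhibit the symmetric Schmidt decomposition directly: letting $|e\rangle = |0\rangle$ and $|g\rangle = \frac{1}{\sqrt d}\sum_j|j\rangle = F|0\rangle$, one has $P_{sym}|0\rangle F|0\rangle = \frac{1}{2}\big(|e\rangle|g\rangle + |g\rangle|e\rangle\big)\cdot\frac{1}{?}$ up to normalization, and diagonalizing the $2\times 2$ symmetric matrix $\begin{pmatrix}0 & \frac12\\ \frac12 & 0\end{pmatrix}$ in the (non-orthonormal) basis $\{|e\rangle,|g\rangle\}$, or equivalently orthonormalizing $\{|e\rangle,|g\rangle\}$ first, yields two distinct positive Schmidt coefficients.

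The cleaner route for the lower bound is to compute the partial trace. Using $\langle j|0\rangle = \delta_{j0}$ and $\langle j | F|0\rangle = \frac{1}{\sqrt d}$ for all $j$, one gets (up to the overall positive normalization) that $\mathrm{Tr}_1(|v\rangle\langle v|)$ is supported on $W$ and acts there as a Gram-type matrix whose two eigenvalues are both strictly positive — concretely, in the orthonormal basis obtained by Gram–Schmidt from $|0\rangle$ and $F|0\rangle$, the matrix of $\mathrm{Tr}_1(|v\rangle\langle v|)$ is a $2\times 2$ positive matrix with nonzero determinant. Since $|0\rangle$ and $F|0\rangle$ are linearly independent (for $d\ge 2$), $W$ genuinely has dimension $2$, and a short computation shows the determinant of that $2\times 2$ block is nonzero; hence the rank is exactly $2$ and, by Theorem \ref{tm:symm}, $srank(P_{sym}|0\rangle F|0\rangle) = 2$.

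I expect the only mildly delicate point to be bookkeeping the normalization and confirming that neither eigenvalue degenerates to $0$ — i.e. verifying $\det\ne 0$ for the $2\times 2$ reduced operator. This is a routine finite computation ($|0\rangle$ and $F|0\rangle$ have overlap $\langle 0|F|0\rangle = \frac{1}{\sqrt d}$, which is strictly between $0$ and $1$ for $d\ge 2$, so the Gram matrix of $\{|0\rangle, F|0\rangle\}$ is positive definite and the symmetrized rank-one-plus-rank-one structure cannot collapse). There is no genuine obstacle; the statement is essentially the observation that $|0\rangle F|0\rangle$ is a product vector whose two tensor factors are distinct non-collinear vectors, so symmetrizing it produces exactly a rank-$2$ symmetric tensor.
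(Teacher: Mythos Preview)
Your approach is correct and is genuinely different from the paper's. The paper computes the full $d\times d$ partial trace $\Tr_1|v\ra\la v|$ explicitly, obtaining the concrete matrix
\ea{
\frac{1}{d}\Bigg(\frac{d+3}{4}|0\ra\la 0| + \frac{1}{2}\sum_{i=1}^{d-1}(|0\ra\la i|+|i\ra\la 0|) + \frac{1}{4}\sum_{i,j=1}^{d-1}|i\ra\la j|\Bigg),
\nonumber
}
and then argues by inspection that this real symmetric matrix has its last $d-1$ rows equal (and not proportional to the first), hence rank $2$. You instead observe the structural identity $P_{sym}|0\ra F|0\ra = \tfrac{1}{2}\big(|0\ra\otimes F|0\ra + F|0\ra\otimes|0\ra\big)$, which immediately confines the vector to $W\otimes W$ with $W=\mathrm{span}\{|0\ra,F|0\ra\}$ and gives rank $\le 2$ for free; the lower bound then reduces to the $2\times 2$ problem of checking that a symmetrized product of two non-collinear unit vectors is never a tensor square. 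Your route avoids the $d\times d$ bookkeeping entirely and makes the result transparent (it is really a statement about two vectors, not about $d$). The paper's brute-force computation has the minor advantage that it yields the explicit reduced matrix, but for the stated proposition your argument is shorter and more conceptual. Your sketch would be complete once you actually record the one-line reason the rank cannot drop to $1$: if $|e\ra|g\ra+|g\ra|e\ra=\lambda|a\ra|a\ra$, contracting with any $\la\phi|\otimes I$ for $|\phi\ra\perp|a\ra$ forces $|e\ra,|g\ra\in\mathrm{span}\{|a\ra\}$, contradicting $\la 0|F|0\ra=1/\sqrt d\in(0,1)$.
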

\begin{proof}
Let $|v\ra = P_{sym} |0\ra F|0\ra$. 
$$ 
	|v\ra = P_{sym} \frac{1}{\sqrt{d}} \sum_{i =0}^{d-1} |0i\ra = P_{sym} \frac{1}{\sqrt{d}} \left( |00\ra + \frac{1}{2}\sum_{i=1}^{d-1} (|0i 	\ra + |i0 \ra) + \frac{1}{2}\sum_{i=1}^{d-1} (|0i \ra - |i0 \ra) \right) = 
$$
\ea{
	= \frac{1}{\sqrt{d}} \left( |00\ra + \frac{1}{2}\sum_{i=1}^{d-1} (|0i \ra + |i0 \ra) \right)
}

Let's compute $\Tr_1 |v\ra\la v|$:
$$
	\Tr_1 |v\ra\la v| =  \Tr_1 \frac{1}{d}  \left( |00\ra + \frac{1}{2}\sum_{i=1}^{d-1} (|0i \ra + |i0 \ra) \right) \left( \la 00| + \frac{1}{2}\sum_{j=1}^{d-1} (\la 0j | + \la j0 |) \right) = 
$$
$$
	= \frac{1}{d} \Tr_1 \Bigg( |00\ra\la 00| + \frac{1}{2}\sum_{j=1}^{d-1} (|00\ra\la 0j | + |00\ra\la j0 |) + \frac{1}{2}\sum_{i=1}^{d-1} (|0i \ra\la 00| + |i0 \ra\la 00 |) + 
$$
$$
+  \frac{1}{4}\sum_{i=1}^{d-1}\sum_{j=1}^{d-1} (|0i \ra \la 0j | + |0i \ra \la j0 | + | i0 \ra \la 0j | + | i0 \ra \la j0 | ) \Bigg) = 
$$
$$
	 = \frac{1}{d} \Bigg( |0\ra\la 0| + \frac{1}{2}\sum_{j=1}^{d-1} (|0\ra\la j |) + \frac{1}{2}\sum_{i=1}^{d-1} (|i\ra\la 0 |) + 
 \frac{1}{4}\sum_{i=1}^{d-1}\sum_{j=1}^{d-1} |i \ra \la j | +  \frac{d-1}{4}|0 \ra \la 0 | \Bigg) = 
$$
\ea{
	 = \frac{1}{d} \Bigg( \frac{d+3}{4} |0\ra\la 0| + \frac{1}{2}\sum_{i=1}^{d-1} (|0\ra\la i | + |i\ra\la 0 |) + 
 \frac{1}{4}\sum_{i=1}^{d-1}\sum_{j=1}^{d-1} |i \ra \la j |  \Bigg) 
}
It's easy to see that this is a real symmetric matrix that has the same $d-1$ last rows that are not proportional to the first row. So its rank is 2. 
\end{proof}

\begin{proposition}\label{prop:bwh-sym-proj}
Let $| b \ra \in \mbb{B}_{WH}$. Then $\{ \sqrt{\frac{2d}{d+1}}  P_{sym}\mbb{T}_i |b\ra \}_{i \in [0..d-1]^2}$ is the unit-norm tight frame on the symmetric subspace $H_{sym}$
\end{proposition}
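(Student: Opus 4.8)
The plan is to exploit the elementary fact that an orthonormal basis of a finite-dimensional space gives a resolution of the identity, and then to cut everything down to $H_{sym}$ by conjugating with $P_{sym}$. First I would record that $\mathbb{T}_i = T_i\otimes T_i$ satisfies $\mathbb{T}_i^\dag = T_{-i}\otimes T_{-i} = \mathbb{T}_{-i}$, so that $\big(\mathbb{T}_i|b\rangle\big)^\dag = \langle b|\mathbb{T}_{-i}$. Since $\{\mathbb{T}_i|b\rangle\}_{i\in[0..d-1]^2}$ is, by Definition \ref{def:wh-basis}, an orthonormal basis of the $d^2$-dimensional space $H\otimes H$, we get the resolution of the identity
\[
\sum_{i\in[0..d-1]^2}\mathbb{T}_i|b\rangle\langle b|\mathbb{T}_{-i} = I_{H\otimes H}.
\]
Conjugating both sides by $P_{sym}$ and using that each $\mathbb{T}_i$ commutes with $P_{sym}$ (as already noted before Proposition \ref{prop:bwh-sym-proj}), this turns into
\[
\sum_{i\in[0..d-1]^2} P_{sym}\mathbb{T}_i|b\rangle\langle b|\mathbb{T}_{-i}P_{sym} = P_{sym},
\]
so the family $\{P_{sym}\mathbb{T}_i|b\rangle\}$ is a Parseval frame on $H_{sym}$, and multiplying each vector by $\sqrt{2d/(d+1)}$ rescales the frame operator to $\tfrac{2d}{d+1}P_{sym}$, i.e. produces a tight frame on $H_{sym}$ with frame bound $\tfrac{2d}{d+1}$. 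The fact that the vectors actually span $H_{sym}$ (rather than a proper subspace) is immediate, since the frame operator $\tfrac{2d}{d+1}P_{sym}$ is positive and of full rank on $H_{sym}$.

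It then remains to check that the rescaled vectors have unit norm, i.e. that $\|P_{sym}\mathbb{T}_i|b\rangle\|^2 = \tfrac{d+1}{2d}$ for every $i$. Using $[\mathbb{T}_i,P_{sym}]=0$ and $\mathbb{T}_{-i}\mathbb{T}_i = I$ one finds $\|P_{sym}\mathbb{T}_i|b\rangle\|^2 = \langle b|\mathbb{T}_{-i}P_{sym}\mathbb{T}_i|b\rangle = \langle b|P_{sym}|b\rangle$, so all these norms are equal, independently of $i$. To pin down their common value I would take the trace of the frame identity above: $d^2\langle b|P_{sym}|b\rangle = \Tr(P_{sym}) = \tfrac{d(d+1)}{2}$, whence $\langle b|P_{sym}|b\rangle = \tfrac{d+1}{2d}$, exactly as required.

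I do not expect any real obstacle here: the whole argument is just the resolution of the identity for an orthonormal basis, the commutation of $\mathbb{T}_i$ with $P_{sym}$, and a one-line trace count to fix the normalization constant. The only place demanding a (very mild) bit of care is that last constant, which is what forces the factor $\sqrt{2d/(d+1)}$ in the statement; everything else is formal.
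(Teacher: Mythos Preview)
Your argument is correct and follows essentially the same route as the paper: project the orthonormal basis onto $H_{sym}$ to get a Parseval frame, use the commutation $[\mathbb{T}_i, P_{sym}] = 0$ to see all norms coincide, and take the trace of the frame identity to extract the constant $\tfrac{d+1}{2d}$. The paper's proof is terser but the ideas are identical.
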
 
\begin{proof}
Note that the norm of every $P_{sym}\mbb{T}_i |b\ra$ is the same since $\mbb{T}_i$ commutes with $P_{sym}$.
It's a tight frame as a projection of a basis. That is we have 
\ea{
	\sum_{i \in [0..d-1]^2} P_{sym}\mbb{T}_i |b\ra \la b|  \mbb{T}_i^\dag P_{sym} = P_{sym}
}
By calculating the trace we conclude that $ d^2 \Tr ( P_{sym} |b\ra\la b | P_{sym} ) = \frac{d(d+1)}{2}$ hence the norm of $P_{sym} |b\ra$ equals to $\sqrt{\frac{d+1}{2d}}$.
\end{proof}

\begin{proposition}{(Combined properties of the frame generated by $|0\ra F|0\ra$)\\}
\label{prop:0F0}
Let $|b\ra = |0\ra F|0\ra $. The set $\{ \sqrt{\frac{2d}{d+1}} P_{sym}\mbb{T}_i |b\ra \}_{i \in [0..d-1]^2}$ is the unit-norm equiangular tight frame on the symmetric subspace $H_{sym}$ with each vector having Schmidt rank 2.
\end{proposition}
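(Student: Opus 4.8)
The plan is to read off the first two properties from propositions already established and then prove equiangularity by a direct computation. The unit-norm tight frame property is immediate: $|0\ra F|0\ra \in \mbb{B}_{WH}$ by Proposition \ref{prop:0F0-bwh}, so Proposition \ref{prop:bwh-sym-proj} applies verbatim with $|b\ra = |0\ra F|0\ra$. The Schmidt rank statement is likewise already in hand: the proposition just above gives $srank(P_{sym}|0\ra F|0\ra) = 2$, and since $\mbb{T}_i$ commutes with $P_{sym}$, the proposition on constancy of the Schmidt rank along an orbit $\{P_{sym}\mbb{T}_i|b\ra\}$ shows that every frame vector has Schmidt rank $2$. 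Hence the only thing left to check is that the frame is equiangular.

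For equiangularity I would first reduce all pairwise overlaps to a single family of scalars. Set $v_i = \sqrt{\tfrac{2d}{d+1}}\,P_{sym}\mbb{T}_i|b\ra$. Using that $P_{sym}$ is an orthogonal projection commuting with every $\mbb{T}_i$, together with the identity $\mbb{T}_i^\dag\mbb{T}_j = (T_{-i}T_j)^{\otimes 2} = \tau^{2<-i,j>}\,\mbb{T}_{j-i}$ (indices understood mod $d$, which is legitimate since $\mbb{T}_{a+db} = \mbb{T}_a$ by $\epsilon^2 = 1$), one obtains
\[
	|\la v_i | v_j\ra| \;=\; \tfrac{2d}{d+1}\,\big|\la b|\,P_{sym}\,\mbb{T}_{j-i}\,|b\ra\big| \;=\; \tfrac{2d}{d+1}\,\big|\la v|\,\mbb{T}_{j-i}\,|b\ra\big|, \qquad \text{where } |v\ra = P_{sym}|b\ra .
\]
So it suffices to prove that $|\la v|\mbb{T}_k|b\ra|$ is independent of $k \in [0..d-1]^2 \setminus \{(0,0)\}$, which is precisely the sort of group-covariant reduction that makes WH-type bases convenient to work with.

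To finish, I would substitute the explicit expressions already available: $|v\ra = \tfrac{1}{\sqrt d}\big(|00\ra + \tfrac12\sum_{i=1}^{d-1}(|0i\ra + |i0\ra)\big)$ from the proposition above, and $\mbb{T}_k|b\ra = |k_1\ra F|k_2\ra = \tfrac{1}{\sqrt d}\sum_{j=0}^{d-1}\omega^{jk_2}|k_1 j\ra$ from the proof of Proposition \ref{prop:0F0-bwh}. Computing $\la v|\mbb{T}_k|b\ra$ and splitting into the cases $k_1 \ne 0$ and $k_1 = 0$: in the first case only the basis vector $|k_1\,0\ra$ contributes and one reads off $\la v|\mbb{T}_k|b\ra = \tfrac{1}{2d}$; in the second case necessarily $k_2 \not\equiv 0 \pmod d$, the geometric sum $\sum_{j=1}^{d-1}\omega^{jk_2} = -1$ enters, and again $\la v|\mbb{T}_k|b\ra = \tfrac{1}{2d}$. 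Therefore $|\la v_i|v_j\ra| = \tfrac{2d}{d+1}\cdot\tfrac{1}{2d} = \tfrac{1}{d+1}$ whenever $i \ne j$, which is exactly the Welch-bound value for $d^2$ unit vectors in dimension $\tfrac{d(d+1)}{2}$, so the frame is equiangular.

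The argument is elementary throughout, so there is no genuine obstacle; the step needing the most care is the reduction in the middle paragraph — one must check that the $k$-dependent phase $\tau^{2<-i,j>}$ truly disappears under the modulus and that no other $k$-dependence leaks outside it — and, in the final step, that the two cases $k_1 = 0$ and $k_1 \ne 0$ are handled separately, the case $k_1 = 0$ being the one where the $-1$ produced by the geometric sum must combine with the factor $\tfrac12$ to reproduce the same value $\tfrac{1}{2d}$ as in the other case.
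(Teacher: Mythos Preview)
Your argument is correct. The reduction to a single overlap $\la v|\mbb{T}_k|b\ra$ is clean, the periodicity observation $\mbb{T}_{a+db}=\mbb{T}_a$ (via $\epsilon^2=1$) justifies working with $j-i$ mod $d$, and your two-case evaluation checks out: for $k_1\neq 0$ only the term $|k_1\,0\ra$ survives and gives $\tfrac{1}{2d}$, while for $k_1=0,\,k_2\neq 0$ the geometric sum contributes $\tfrac12(-1)$ and combines with the $|00\ra$ term to give the same value.

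The route differs from the paper's. The paper does not reuse the explicit coordinate form of $|v\ra = P_{sym}|0\ra F|0\ra$; instead it first proves the operator identity
\[
P_{sym} = \frac{d+1}{2d}\Big(I + \frac{1}{d+1}\sum_{a\neq(0,0)} T_a\otimes T_{-a}\Big)
\]
(Lemma~\ref{lem:Psym}) and then evaluates $\la b|P_{sym}\mbb{T}_k|b\ra$ by expanding $P_{sym}$ in this basis and using the elementary matrix element $\la 0|\la 0|F^\dag\, T_x\otimes T_y\,|0\ra F|0\ra = \delta_{x_1,0}\delta_{y_2,0}$. Your approach is shorter and entirely self-contained given the earlier propositions, since you avoid proving the lemma; the paper's approach costs the extra lemma but yields a structural formula for $P_{sym}$ in the Weyl--Heisenberg basis that is of independent interest and foreshadows the invariant-subspace decomposition used later.
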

\begin{proof}
Let's prove the remaining equiangular property. 
%Let $ | v \ra = P_{sym} |b \ra = \frac{1}{\sqrt{d}} \left( |00\ra + \frac{1}{2}\sum_{i=1}^{d-1} (|0i \ra + |i0 \ra) \right)$. 
It's convenient  to use the following equality
\begin{lemma}\label{lem:Psym}
\ea{
	P_{sym} = \frac{d+1}{2d}\Bigg( I + \frac{1}{d+1} \sum_{a \in [0..d-1]^2}^{a\neq (0,0)} T_a \otimes T_{-a} \Bigg)
}
\end{lemma}
\begin{proof}

Let's denote the right hand side of the equality as $P$ and let's prove that $P = P_{sym}$. It's easy to see that $P = P^\dag$ and $\Tr(P) = \frac{d(d+1)}{2}$. Now let's verify $\forall i \in [0..d-1]$:
$$
	P |i\rangle|i\rangle = \frac{d+1}{2d}\Bigg( I + \frac{1}{d+1} \sum_{a \in [0..d-1]^2}^{a\neq (0,0)} T_a \otimes T_{-a} \Bigg) |i\rangle|i\rangle = 
$$
$$ 
	= \frac{d+1}{2d}\Bigg( |i\rangle|i\rangle + \frac{1}{d+1} \sum_{a \in [0..d-1]^2}^{a\neq (0,0)} \tau^{2a_1a_2} |i+a_1\rangle|i - a_1\rangle \Bigg) =
$$
$$ 
	=  \frac{d+1}{2d}\Bigg( |i\rangle|i\rangle + \frac{1}{d+1} \sum_{a_1 \in [1..d-1]} |i+a_1\rangle|i - a_1\rangle \sum_{a_2 \in [0..d-1]} \omega^{a_1a_2}  +
$$
\ea{
	+ \frac{1}{d+1} \sum_{a_2 \in [1..d-1]} |i\rangle|i\rangle \Bigg) = \frac{d+1}{2d}\Bigg( |i\rangle|i\rangle + \frac{1}{d+1} (d-1)|i\rangle|i\rangle \Bigg) = |i\rangle|i\rangle
}

Similarly $\forall i,j \in [0..d-1], i \neq j$:

$$
P |i\rangle|j\rangle = \frac{d+1}{2d}\Bigg( I + \frac{1}{d+1} \sum_{a \in [0..d-1]^2}^{a\neq (0,0)} T_a \otimes T_{-a} \Bigg)|i\rangle|j\rangle = 
$$
$$
= \frac{d+1}{2d}\Bigg( |i\rangle|j\rangle + \frac{1}{d+1} \sum_{a \in [0..d-1]^2}^{a\neq (0,0)} \omega^{a_2(a_1+i-j)}|i+a_1\rangle|j-a_1\rangle \Bigg) = 
$$
$$
= \frac{d+1}{2d}\Bigg( |i\rangle|j\rangle + \frac{1}{d+1} \sum_{a_1 \in [1..d-1]}^{a_1\neq j-i} |i+a_1\rangle|j-a_1\rangle \sum_{a_2 \in [0..d-1]} \omega^{a_2(a_1+i-j)} +  
$$
$$
+ \frac{1}{d+1} |i\rangle|j\rangle \sum_{a_2 \in [1..d-1]} \omega^{a_2(0+i-j)} + \frac{1}{d+1} |j\rangle|i\rangle \sum_{a_2 \in [1..d-1]} \omega^{a_2((j-i)+i-j)}\Bigg) = 
$$
\ea{
	= \frac{d+1}{2d}\Bigg(|i\rangle|j\rangle - \frac{1}{d+1} |i\rangle|j\rangle + \frac{d}{d+1} |j\rangle|i\rangle \Bigg) = \frac{1}{2}(|i\rangle|	j\rangle + |j\rangle|i\rangle)
}
Hence $P (|i\rangle|j\rangle + |j\rangle|i\rangle) = |i\rangle|j\rangle + |j\rangle|i\rangle $ and $P (|i\rangle|j\rangle - |j\rangle|i\rangle) = 0$. So $P$ is identity on $H_{sym}$ and 0 on $H_{asym}$ therefore we can conclude that $P = P_{sym}$.

\end{proof}

%\begin{remark}
%It may be of particular interest the generalized equality, see \ref{prop:genPsym} in the Appendix.
%\end{remark}

Let's continue the proof of prop. \ref{prop:0F0}. 

Note that $\forall x = (x_1, x_2), y = (y_1, y_2) \in [0..d-1]^2$ :

$$
\la 0 | \la 0 | F^\dag \cdot T_x \otimes T_y \cdot | 0 \ra F |0\ra = \la 0 | T_{(x_1,x_2)} | 0 \ra \la 0 | T_{(y_2,-y_1)} | 0 \ra = 
$$
\ea{
	= \begin{cases} 1, & \text{ if } x_1=0 \text{ and } y_2=0 \\ 0, & \text{else} \end{cases}
}

Finally, for $k \in [0..d-1]^2$, $k \neq (0,0):$ 
$$
(P_{sym} |0\ra F |0 \ra)^\dag (P_{sym} \mbb{T}_k |0\ra F |0 \ra) = \la 0 | \la 0 | F^\dag \cdot P_{sym} \mbb{T}_k \cdot |0\ra F |0 \ra = 
$$
$$
= \la 0 | \la 0 | F^\dag \cdot \frac{d+1}{2d}( I + \frac{1}{d+1} \sum_{i \in [0..d-1]^2}^{i\neq (0,0)} T_i \otimes T_{-i} ) \cdot T_k \otimes T_k \cdot |0\ra F |0 \ra = 
$$
$$
= \la 0 | \la 0 | F^\dag \cdot \frac{d+1}{2d}( T_k \otimes T_k + \frac{1}{d+1} \sum_{i \in [0..d-1]^2}^{i\neq (0,0)} T_{i+k} \otimes T_{-i+k} ) \cdot |0\ra F |0 \ra = 
$$
\ea{
	= \la 0 | \la 0 | F^\dag \cdot \frac{1}{2d}( \sum_{i \in [0..d-1]^2}^{i\neq (0,0)} T_{i+k} \otimes T_{-i+k} ) \cdot |0\ra F | 0 \ra = \frac{1}{2d}, 
}
because there is only one $i$ such that $i_1+k_1 = 0$ and $-i_2+k_2 = 0$. 
So the modulus of a scalar product between frame vectors is $\frac{1}{2d} \cdot \frac{2d}{d+1} = \frac{1}{d+1}$. 
End of proof of prop. \ref{prop:0F0}.

\end{proof}

Let's go back to our aim -- the description of $\mbb{B}_{WH}$.

\begin{theorem}\label{tm:uwh-group} The following holds: \\
\begin{enumerate}
\item Let $|b_1\ra, |b_2\ra \in \mbb{B}_{WH}$. Then there is a unique unitary operator $U \in \mcl{U}(H\otimes H)$ such that $U|b_1\ra = |b_2\ra$ and $U = \sum_{i\in [0..d-1]^2} c_i \mbb{T}_i$ for some $c_i \in \mbb{C}$. It can be computed by the formula  
\ea{
U = \sum_{i\in [0..d-1]^2} \la b_1| \mbb{T}_i^\dagger | b_2 \ra \mbb{T}_i
}
\item Let $|b\ra \in \mbb{B}_{WH}$ and a unitary $U = \sum_{i\in [0..d-1]^2} c_i \mbb{T}_i$, $c_i \in \mbb{C}$. Then $U |b\ra \in \mbb{B}_{WH}$.
\end{enumerate}
\end{theorem}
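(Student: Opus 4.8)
The plan is to treat the two parts in turn, using throughout two equivalent reformulations of ``$|b\ra\in\mbb{B}_{WH}$'': the orthonormality of the family $\{\mbb{T}_i|b\ra\}_{i\in[0..d-1]^2}$ — equivalently the resolution of the identity $\sum_i\mbb{T}_i|b\ra\la b|\mbb{T}_i^\dag=I$ — and the orthogonality relations $\la b|\mbb{T}_m|b\ra=\delta_{m,0}$. I will also use freely the twisted multiplication law $\mbb{T}_a\mbb{T}_b=\omega^{<a,b>}\mbb{T}_{a+b}$ (immediate from $T_aT_b=\tau^{<a,b>}T_{a+b}$ and $\tau^2=\omega$), the adjoint rule $\mbb{T}_a^\dag=\mbb{T}_{-a}$, and the periodicity $\mbb{T}_{a+db}=\mbb{T}_a$; in particular the index of a $\mbb{T}$ only matters modulo $d$ and the symplectic form may be computed modulo $d$.

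For part (1) I would first settle uniqueness and the formula. If $U=\sum_i c_i\mbb{T}_i$ is unitary with $U|b_1\ra=|b_2\ra$, then expanding $|b_2\ra=\sum_i c_i\mbb{T}_i|b_1\ra$ in the orthonormal basis $\{\mbb{T}_j|b_1\ra\}$ and taking inner products forces $c_j=\la\mbb{T}_jb_1|b_2\ra=\la b_1|\mbb{T}_j^\dag|b_2\ra$; hence such a $U$ is unique and equals the operator in the displayed formula. For existence I would take $U:=\sum_j\la b_1|\mbb{T}_j^\dag|b_2\ra\,\mbb{T}_j$; then $U|b_1\ra=|b_2\ra$ is automatic, being the reconstruction of $|b_2\ra$ in the basis $\{\mbb{T}_j|b_1\ra\}$. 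The real content is that this $U$ is unitary. I would compute $U^\dag U$ by the multiplication law, grouping the double sum by $m$ equal to the difference of the two indices, obtaining $U^\dag U=\sum_m\big(\sum_j\bar c_j c_{j+m}\omega^{-<j,m>}\big)\mbb{T}_m$; after inserting $c_j=\la b_1|\mbb{T}_{-j}|b_2\ra$ and simplifying the resulting $\omega$-powers, the inner sum becomes $\la b_1|\mbb{T}_{-m}\big(\sum_j\mbb{T}_j|b_2\ra\la b_2|\mbb{T}_j^\dag\big)|b_1\ra$. Since $\{\mbb{T}_j|b_2\ra\}$ is an orthonormal basis the bracketed operator is $I$, so the inner sum is $\la b_1|\mbb{T}_{-m}|b_1\ra=\delta_{m,0}$ because $|b_1\ra\in\mbb{B}_{WH}$; thus $U^\dag U=I$, which in finite dimension already makes $U$ unitary.

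For part (2), given a unitary $U=\sum_i c_i\mbb{T}_i$, I must verify that $|b\ra\in\mbb{B}_{WH}$ implies $\la Ub|\mbb{T}_k|Ub\ra=\delta_{k,0}$ for $k\in[0..d-1]^2$ (unit norm being automatic, and orthonormality of $\{\mbb{T}_kU|b\ra\}$ then following from these relations by the usual commutation phases). I would expand $\la b|U^\dag\mbb{T}_kU|b\ra=\sum_{i,j}\bar c_i c_j\,\la b|\mbb{T}_{-i}\mbb{T}_k\mbb{T}_j|b\ra$ and reduce each triple product by the multiplication law; since $\la b|\mbb{T}_m|b\ra=\delta_{m,0}$, only the terms with $-i+k+j\equiv0\pmod d$ survive. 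Setting $j=i-k$ and cancelling the accumulated symplectic phases (several terms drop out because $<a,a>=0$) leaves $\la b|U^\dag\mbb{T}_kU|b\ra=\sum_i\bar c_i c_{i-k}\,\omega^{<k,i>}$. The last step is to recognize this sum: expanding $UU^\dag=I$ in exactly the same way and matching the (linearly independent) operators $\mbb{T}_m$ gives the scalar identities $\sum_i\bar c_i c_{i+m}\omega^{<i,m>}=\delta_{m,0}$ for every $m$, and with $m=-k$ this is precisely the sum just obtained; hence $\la Ub|\mbb{T}_k|Ub\ra=\delta_{k,0}$ and $U|b\ra\in\mbb{B}_{WH}$.

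I expect the main obstacle to be purely computational: bookkeeping the powers of $\omega$ produced by the twisted multiplication and by the reductions modulo $d$, and — a small but genuine point in part (2) — noticing that one must compare the resulting sum with the identity coming from $UU^\dag=I$ rather than from $U^\dag U=I$, since the two differ by the sign of the symplectic phase. Conceptually there is no obstacle: once the two reformulations of the fiducial condition (the resolution of the identity and the vanishing of $\la b|\mbb{T}_m|b\ra$) are in hand, both statements fall out of the algebra of the $\mbb{T}_i$.
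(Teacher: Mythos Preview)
Your proof is correct. Both parts go through as you describe: the bookkeeping of $\omega$-phases works out, the linear independence of $\{\mbb{T}_m\}_{m\in[0..d-1]^2}$ (they are Hilbert--Schmidt orthogonal) legitimizes matching coefficients in the expansion of $UU^\dag=I$, and your observation that one needs $UU^\dag$ rather than $U^\dag U$ in part~(2) is exactly the right one.

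The paper, however, takes an entirely different route. It explicitly declines to present a direct proof of this theorem; instead it first establishes the concrete descriptions of $\mbb{U}_{WH}$ and $\mbb{B}_{WH}$ (Theorems~\ref{tm:uwh-odd}, \ref{tm:bwh-odd} for $d$ odd and Theorems~\ref{tm:uwh-even}, \ref{tm:bwh-even} for $d$ even), obtained by decomposing $H\otimes H$ into irreducible subspaces for the pair $\{C\otimes C,\,S\otimes S\}$ via representation theory of the relation $AB=\omega^2 BA$. From those structure theorems---which parametrize $\mbb{B}_{WH}$ by orthonormal frames in a fixed reference subspace and $\mbb{U}_{WH}$ by the corresponding unitaries---the regular action in Theorem~\ref{tm:uwh-group} falls out immediately. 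Your approach is more elementary and uniform (no odd/even split, no representation theory), and is in fact the ``direct proof'' the paper alludes to but omits. The paper's detour buys the explicit parametrization of $\mbb{B}_{WH}$, which is what it actually needs for the Main Theorem~\ref{tm:main}; for Theorem~\ref{tm:uwh-group} in isolation your argument is shorter and cleaner.
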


Clearly the set of unitary operators $U = \sum_{i\in [0..d-1]^2} c_i \mbb{T}_i, c_i \in \mbb{C}$, forms a group. We will denote this group  as $\mbb{U}_{WH}$. By theorem \ref{tm:uwh-group} this group acts regularly on $\mbb{B}_{WH}$. As a corollary we can describe $\mbb{B}_{WH}$ in terms of this group. 

\begin{corollary} 
The set $\mbb{B}_{WH}$ is exactly the set $\{ U |0\ra F|0\ra \}_{U \in \mbb{U}_{WH}}$ 
\end{corollary}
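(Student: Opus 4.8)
The plan is to deduce the statement directly from Theorem \ref{tm:uwh-group}, using Proposition \ref{prop:0F0-bwh} to supply a legitimate base point: it tells us that $|0\ra F|0\ra$ really is an element of $\mbb{B}_{WH}$, so the orbit $\{ U|0\ra F|0\ra \}_{U \in \mbb{U}_{WH}}$ makes sense and starts inside $\mbb{B}_{WH}$.

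First I would establish the inclusion $\{ U|0\ra F|0\ra \}_{U \in \mbb{U}_{WH}} \subseteq \mbb{B}_{WH}$. By definition every $U \in \mbb{U}_{WH}$ is a unitary of the form $\sum_i c_i \mbb{T}_i$, so part (2) of Theorem \ref{tm:uwh-group} applied to $|b\ra = |0\ra F|0\ra \in \mbb{B}_{WH}$ gives $U|0\ra F|0\ra \in \mbb{B}_{WH}$ at once.

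For the reverse inclusion, let $|b\ra \in \mbb{B}_{WH}$ be arbitrary and apply part (1) of Theorem \ref{tm:uwh-group} to the pair $|b_1\ra = |0\ra F|0\ra$, $|b_2\ra = |b\ra$. This produces the (unique) operator $U = \sum_i \la 0|\la 0| F^\dag \mbb{T}_i^\dag |b\ra\, \mbb{T}_i$, which is unitary and satisfies $U|0\ra F|0\ra = |b\ra$; being a unitary lying in the linear span of the $\mbb{T}_i$, it is by definition an element of $\mbb{U}_{WH}$, so $|b\ra \in \{ U|0\ra F|0\ra \}_{U \in \mbb{U}_{WH}}$. Combining the two inclusions yields the equality. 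Equivalently, one may simply invoke the remark recorded after Theorem \ref{tm:uwh-group} — that $\mbb{U}_{WH}$ acts regularly on $\mbb{B}_{WH}$ — so $\mbb{B}_{WH}$ is a single orbit and $|0\ra F|0\ra$ is a point of it.

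There is essentially no obstacle: the corollary is a formal consequence of the transitivity assertion in Theorem \ref{tm:uwh-group}(1) and the closure assertion in Theorem \ref{tm:uwh-group}(2). The only point meriting a line of care in the reverse inclusion is that the operator $U$ one writes down is genuinely \emph{unitary} (not merely a linear combination of the $\mbb{T}_i$), but that is already part of the statement of Theorem \ref{tm:uwh-group}(1), so nothing further needs to be verified here.
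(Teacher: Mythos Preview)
Your proof is correct and matches the paper's approach exactly: the paper does not give a separate argument for the corollary but simply notes (just before stating it) that by Theorem~\ref{tm:uwh-group} the group $\mbb{U}_{WH}$ acts regularly on $\mbb{B}_{WH}$, from which the corollary follows using $|0\ra F|0\ra \in \mbb{B}_{WH}$ (Proposition~\ref{prop:0F0-bwh}). Your write-up is a faithful unpacking of this one-line deduction.
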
 

This description is not very useful yet. Also we'll not present the direct proof of theorem \ref{tm:uwh-group} here. Despite historically it was obtained first, in this paper we present a bit shorter proof of the main result -- the concrete description of the set $\mbb{B}_{WH}$  (theorems \ref{tm:bwh-odd}, \ref{tm:bwh-even}). It's also easier to deduce theorem \ref{tm:uwh-group} from the concrete description. 

%Though this description is not very useful. In what follows we describe the group $\mbb{U}_{WH}$ and the set $\mbb{B}_{WH}$ in a different terms.  

Anyway, let's study the group $\mbb{U}_{WH}$. Consider the operators $C \otimes C$ and $S \otimes S$. It's easy to see that 
\ea{
	(C \otimes C) (S \otimes S) = \omega^2 (S \otimes S) (C \otimes C)
}

Recall that the representation theory of the equation 
\ea{
	AB = \lambda BA,  \text{ where }  \lambda^n = 1,  ~ \lambda = exp( 2\pi i \cdot \frac{k}{n}),  ~ gcd(k,n) = 1,
}
is a known subject that was studied in many contexts. We refer to \cite{Davidson}, theorem VII.5.1. It follows that every finite-dimensional representation of such pair $\{A,B\}$ is a direct sum of irreducible representations and for every $(\alpha, \beta) \in \mbb{C}^2$  there is a unique (up to unitary equivalence) irreducible representation in dimension $n$ such that $A^n = \alpha I, B^n = \beta I$. It can be constructed as $A = \sqrt[n]{\alpha}C^k, B = \sqrt[n]{\beta}S$ (where $C$ and $S$ are clock and shift matrices of size $n \times n$).  

From this point it can be seen that the cases when $d$ is odd and $d$ is even differ significantly. The number $\omega^2$ is a primitive root of 1 of order $d$ if $d$ is odd but $\omega^2$ is a primitive root of 1 of order $\frac{d}{2}$ if $d$ is even. Furthermore, if $d$ is odd then $(C \otimes C)^d = (S \otimes S)^d = I$, so $(\alpha,\beta) = (1,1)$, i.e. there is a unique irreducible representation $\pi = \{ C^2, S \}$ of the equation $AB=\omega^2BA$ in dimension $d$ and $\{C \otimes C, S \otimes S \}$ is a direct sum of $d$ irreducible representations equivalent to $\pi$. But if $d$ is even then $\text{Spectrum}(C \otimes C)^{\frac{d}{2}} = \text{Spectrum}(S \otimes S)^{\frac{d}{2}} = \{+1,-1\}$. In fact, 4 options possible for $(\alpha, \beta) = (\pm1, \pm 1)$ and there are 4 possible inequivalent irreducible representations $\pi_{(\pm1, \pm 1)}$ of dimension $\frac{d}{2}$ (which are $\{C,S\}, \{\omega C,S\}, \{C,\omega S\}, \{\omega C, \omega S\}$) in the decomposition of $\{C \otimes C, S \otimes S \}$. As we'll see later, every irreducible $\pi_{(\pm1, \pm 1)}$ appears exactly $\frac{d}{2}$ times in this decomposition. Let's go to the details. 

\subsection{The case when $d$ is odd} 

Since $\{C \otimes C, S \otimes S \}$ is equivalent to the direct sum of $d$ equivalent representations, then the decomposition of $H \otimes H$ onto invariant subspaces is not unique (in contrast to the case of the sum of inequivalent representations). To be precise we consider a one concrete decomposition of $H \otimes H$ onto $d$ invariant $d$-dimensional parts which is aligned with the decomposition onto $H_{sym} \oplus H_{asym}$.  

For $i=0$ denote 
\ea{
	H_{0} = span\{ |j\ra|j\ra \}_{ j\in [0..d-1] } = span\{ |0\ra|0\ra, |1\ra|1\ra , .. , |d-1\ra | d-1\ra \},
}
and for $i \in [1..\frac{d-1}{2}]$ denote
$$
H_{i} = span\{ \frac{1}{\sqrt{2}}(|j\ra|j + i\ra + |j +i \ra|j \ra) \}_{j\in [0..d-1]}= 
$$
$$
= span\{ \frac{1}{\sqrt{2}}(|0\ra|i\ra + |i \ra|0 \ra), \frac{1}{\sqrt{2}}(|1\ra|1 + i\ra + |1 +i \ra|1 \ra), .. 
$$
\ea{
	.., \frac{1}{\sqrt{2}}(|d-1\ra|d-1 + i\ra + |d-1 +i \ra|d-1 \ra)\},
}
$$
H_{-i} =  span\{ \frac{1}{\sqrt{2}}(|j\ra|j + i\ra - |j +i \ra|j \ra) \}_{j\in [0..d-1]}= 
$$
$$
= span\{ \frac{1}{\sqrt{2}}(|0\ra|i\ra - |i \ra|0 \ra), \frac{1}{\sqrt{2}}(|1\ra|1 + i\ra - |1 +i \ra|1 \ra), .. 
$$
\ea{
	.. , \frac{1}{\sqrt{2}}(|d-1\ra|d-1 + i\ra - |d-1 +i \ra|d-1 \ra)\},	
}
where the sum of indices in bra-kets is taken modulo $d$.

Clearly, $\text{dim} H_i = d$ and
\ea{
	\bigoplus_{i \in [0..\frac{d-1}{2}]} H_i = H_{sym},
}
\ea{
	\bigoplus_{i \in [1..\frac{d-1}{2}]} H_{-i} = H_{asym}
}

It's not hard to check that for all $i \in [-\frac{d-1}{2} .. \frac{d-1}{2}]$ subspaces $H_i$ are invariant subspaces for operators $C \otimes C$ and $S \otimes S$. So if 
\ea{
	U = \sum_{i,j\in [0..d-1]} c_{ij} (S\otimes S)^i(C \otimes C)^j
} 
then we can write 
\ea{ 
	U = \bigoplus_{i \in [-\frac{d-1}{2}, .. ,\frac{d-1}{2}]} U |_{H_i}
}
Furthermore, since all irreducible parts of $\{C \otimes C, S \otimes S \}$ are equivalent to $\{C^2, S\}$, then $\forall i \in [-\frac{d-1}{2} .. \frac{d-1}{2}]$ there are isometries $O_i: H \rightarrow H_i$ such that 
\ea{
	O_i^\dag \cdot U |_{H_i} \cdot O_i = O_0^\dag \cdot U |_{H_0} \cdot O_0 = \sum_{i,j\in [0..d-1]} c_{ij} S^i(C^2)^j
}
It's useful to note that $O_0$ can also be defined by $O_0 | i \ra = | i \ra | i \ra$, $\forall i \in [0..d-1]$.

It will be convenient for us to consider unitary operators $K_i \in \mcl{U}(H \otimes H)$ such that $K_i$ swaps subspaces $H_i$ and $H_0$ in accordance to $O_i, O_0$ and acts as identity on $(H_0 \oplus H_i)^\perp$. That is 
\ea{
	K_i |_{H_i} = O_0 \cdot O_i^\dag, % |_{H_i}, 
	\nn
	K_i |_{H_0} = O_i \cdot O_0^\dag, % |_{H_0}, 
	\nn
	K_i |_{(H_0 \oplus H_i)^\perp} = I. 
}

Clearly $K_i^2=I$ and $K_i^\dag = K_i$. Also 
\ea{
	K_iP_i = P_0K_i,
	\nn 
	P_iK_i = K_iP_0,
} 
where $P_i$ is the projector on $H_i$. 

Operators $K_i$ satisfy a useful property $\forall U = \sum_{i,j\in [0..d-1]} c_{ij} (S^{\otimes2})^i(C^{\otimes2})^j$: 
\ea{
	K_i U = U K_i
}

In fact, in the odd case it's not hard to find an explicit formula for $K_i$ (though we won't use it).
$\forall j \in [0..d-1], \forall i \in [1, .. , \frac{d-1}{2}]$: $K_0 = I$ and
\ea{
	K_i |j+\frac{i}{2}\ra|j+\frac{i}{2}\ra = \frac{1}{\sqrt{2}} (| j \ra | j+i \ra +  | j+i \ra | j \ra),
	\nn
	K_{-i} |j+\frac{i}{2}\ra|j+\frac{i}{2}\ra = \frac{1}{\sqrt{2}} (| j \ra | j+i \ra -  | j+i \ra | j \ra),
}
where the sum of indices and division by 2 is taken modulo $d$. 

We are ready to describe the set $\mbb{U}_{WH}$ in the odd case.

\begin{theorem}\label{tm:uwh-odd}
Assume $d$ is odd. Then 
\begin{enumerate}
\item
For any unitary operator $U_0 \in \mcl{U}(H_0)$ there exists unique $U \in \mbb{U}_{WH}$ such that
\ea{
	U = \bigoplus_{i \in [-\frac{d-1}{2}..\frac{d-1}{2}]} U |_{H_i} = \bigoplus_{i \in [-\frac{d-1}{2}..\frac{d-1}{2}]} K_i \cdot U_0 \cdot K_i |_{H_i} 
}
\item
Any operator $U \in \mbb{U}_{WH}$ has such form for a unitary $U_0 = U |_{H_0} \in \mcl{U}(H_0)$
\end{enumerate}
\end{theorem}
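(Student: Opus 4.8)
The plan is to realize $\mbb{U}_{WH}$ as the group of unitary elements of the linear span $V:=\mathrm{span}\{\mbb{T}_i\}_{i\in[0..d-1]^2}$ and then to parametrize $V$ by restriction to the block $H_0$, reading off the remaining blocks through the isometries $O_i$ and the symmetries $K_i$. First I would check that $V$ is a unital $*$-subalgebra of $\mcl{L}(H\otimes H)$ equal to the algebra generated by $C\otimes C$ and $S\otimes S$: from $\mbb{T}_a\mbb{T}_b=\tau^{2\langle a,b\rangle}\mbb{T}_{(a+b)\bmod d}$ (using $T_aT_b=\tau^{\langle a,b\rangle}T_{a+b}$ together with $\mbb{T}_{i+dj}=\mbb{T}_i$, valid since $\epsilon=1$ for odd $d$), $\mbb{T}_a^\dagger=\mbb{T}_{-a}$, $I=\mbb{T}_{(0,0)}\in V$, and $\mbb{T}_i=\tau^{2i_1i_2}(S\otimes S)^{i_1}(C\otimes C)^{i_2}$. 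Hence each $H_i$, being invariant under the two generators, is $V$-invariant; and since $V$ is $*$-closed, $U^{-1}=U^\dagger\in V$ for a unitary $U\in V$, so such a $U$ maps each $H_i$ onto itself with $U|_{H_i}$ unitary. Note that, by definition, $\mbb{U}_{WH}$ is precisely the set of unitary elements of $V$.

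The core step is to show that the restriction map $\rho\colon V\to\mcl{L}(H_0)$, $U\mapsto U|_{H_0}$, is a linear isomorphism. Since $O_0|k\rangle=|k\rangle|k\rangle$, one has $(S\otimes S)|_{H_0}=O_0SO_0^\dagger$ and $(C\otimes C)|_{H_0}=O_0C^2O_0^\dagger$, so $\mbb{T}_i|_{H_0}=O_0\,\tau^{2i_1i_2}S^{i_1}C^{2i_2}\,O_0^\dagger=O_0\,T_{(i_1,\,2i_2)}\,O_0^\dagger$. As $d$ is odd, $(i_1,i_2)\mapsto(i_1,\,2i_2\bmod d)$ is a bijection of $[0..d-1]^2$, so $\{\mbb{T}_i|_{H_0}\}_i$ is the image under $O_0(\cdot)O_0^\dagger$ of the operator basis $\{T_a\}_{a\in[0..d-1]^2}$ of $\mcl{L}(H)$, hence a basis of $\mcl{L}(H_0)$. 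A family of $d^2$ operators whose $\rho$-images form a basis of the $d^2$-dimensional space $\mcl{L}(H_0)$ must itself be linearly independent; therefore $\dim V=d^2$ and $\rho$ is bijective. In particular, for every $U_0\in\mcl{L}(H_0)$ there is a unique $U\in V$ with $U|_{H_0}=U_0$.

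It then remains to identify the other blocks. For $U=\sum_{i,j}c_{ij}(S^{\otimes2})^i(C^{\otimes2})^j\in V$, the intertwining property of the isometries $O_m$ recorded before the theorem gives $O_m^\dagger(U|_{H_m})O_m=\sum_{i,j}c_{ij}S^i(C^2)^j=O_0^\dagger(U|_{H_0})O_0$ for every $m$, whence, using $K_m|_{H_m}=O_0O_m^\dagger$ and $K_m|_{H_0}=O_mO_0^\dagger$,
$$U|_{H_m}=O_mO_0^\dagger\,(U|_{H_0})\,O_0O_m^\dagger=K_m\,(U|_{H_0})\,K_m|_{H_m}.$$
This gives both parts. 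For (2): any $U\in\mbb{U}_{WH}\subseteq V$ has $U_0:=U|_{H_0}\in\mcl{U}(H_0)$ by the first paragraph, and $U=\bigoplus_m K_mU_0K_m|_{H_m}$ by the displayed formula. For (1): given $U_0\in\mcl{U}(H_0)$, the unique $U\in V$ with $U|_{H_0}=U_0$ equals $\bigoplus_m K_mU_0K_m|_{H_m}$, which is a direct sum of unitaries (each summand being a composition $H_m\to H_0\to H_0\to H_m$ of unitary maps), hence unitary, so $U\in\mbb{U}_{WH}$; its uniqueness in $\mbb{U}_{WH}$ is inherited from its uniqueness in $V$.

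I expect the only genuinely load-bearing point to be the isomorphism $\rho\colon V\to\mcl{L}(H_0)$: this simultaneously pins down $\dim V=d^2$ (equivalently, the linear independence of the $d^2$ operators $\mbb{T}_i$) and shows that $U_0=U|_{H_0}$ may be prescribed as an arbitrary unitary, and it is exactly where oddness of $d$ enters, through invertibility of $2$ modulo $d$. Everything else is bookkeeping with the isometries $O_i$, the symmetries $K_i$, and the decomposition of $\{C\otimes C,S\otimes S\}$ into copies of $\{C^2,S\}$ set up before the theorem.
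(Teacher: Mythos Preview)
Your proof is correct and follows essentially the same approach as the paper: both hinge on the observation that for odd $d$ the family $\{S^{i}C^{2j}\}_{i,j\in[0..d-1]}$ (equivalently $\{T_{(i_1,2i_2)}\}$) is a basis of $\mcl{L}(H)$, so restriction to $H_0$ sets up a bijection between $V=\mathrm{span}\{\mbb{T}_i\}$ and $\mcl{L}(H_0)$, with the remaining blocks recovered via the intertwiners $O_i$ (or $K_i$). Your version is somewhat more explicit than the paper's about the $*$-algebra structure of $V$, the uniqueness, and why the resulting $U$ is unitary, but the substance is the same.
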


\begin{proof}
The second statement follows from the definitions of $O_i$ and $K_i$. To see why any unitary $U_0$ is suitable note that if $d$ is odd then operators $C^2$ and $S$ generate all $S^iC^j$, $(i,j) \in [0..d-1]^2$, up to some phase. So any linear operator on $H$ can be represented as a linear combination of $S^i(C^2)^j$, thus any unitary $U_0$ on $H_0$ can be represented as
\ea{
	U_0 = O_0 \cdot \sum_{i,j \in [0..d-1]} c_{ij}S^i(C^2)^j \cdot O_0^\dag
}
Hence 
\ea{
	U = \sum_{i,j \in [0..d-1]} c_{ij}(S^{\otimes 2})^i(C^{\otimes 2})^j 
}
is the required unitary from $\mbb{U}_{WH}$ because $U|_{H_0} = U_0$ and $U|_{H_i} = K_i \cdot U|_{H_0} \cdot K_i |_{H_i}$.  %To see why $U$ is indeed a unitary it's enough to verify that $U|_{H_i} = K_i \cdot U|_{H_0} \cdot K_i |_{H_i}$. This is true since $K_i$ commutes with any such $U$.

\end{proof}

Finally, let's describe the set $\mbb{B}_{WH}$ in the odd case. 

As $P_i$ we denote the orthogonal projection on $H_i$ (hence $P_i \mbb{T}_a = \mbb{T}_a P_i$). 

\begin{theorem}\label{tm:bwh-odd} {(The description of the set $\mbb{B}_{WH}$ in the odd case)} \quad\\
Assume $d$ is odd. Let $|b\ra \in H \otimes H$ with $\la b | b \ra=1$. 
For $i \in [-\frac{d-1}{2}..\frac{d-1}{2}]$ define 
\ea{
	|b_i^0\ra = K_i P_i |b\ra,
} 
so $|b\ra = \sum_i K_i |b_i^0\ra$ and $|b_i^0\ra \in H_0$. Then $|b\ra \in \mbb{B}_{WH}$ if and only if 
\ea{
	\forall i,j \in [-\frac{d-1}{2}..\frac{d-1}{2}] ~:~ \la b_i^0 | b_j^0 \ra = \frac{1}{d} \delta_{ij}
}
In other words, to construct $|b\ra \in \mbb{B}_{WH}$ we can pick any orthogonal basis $\{ |b_i^0\ra \}$ in $H_0$ with the norm of each vector $\frac{1}{\sqrt{d}}$ and set $|b\ra = \sum_i K_i | b_i^0\ra$. 
\end{theorem}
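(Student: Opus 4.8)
The plan is to exploit the block structure already set up — the invariant subspaces $H_i$, the self-adjoint swaps $K_i$, and the isometry $O_0\colon H\to H_0$ — to collapse the $d^2$ scalar conditions defining $\mbb{B}_{WH}$ into a single operator identity on $H$. Put $|\psi_i\ra := O_0^\dag|b_i^0\ra\in H$ for $i\in[-\frac{d-1}{2}..\frac{d-1}{2}]$; then $|b_i^0\ra=O_0|\psi_i\ra$, $\la b_i^0|b_j^0\ra=\la\psi_i|\psi_j\ra$, $K_i|b_i^0\ra=P_i|b\ra\in H_i$, and there are exactly $d=\dim H$ such indices. The key structural observation is that, since $T_a=\tau^{a_1a_2}S^{a_1}C^{a_2}$, one has $\mbb{T}_a=T_a\otimes T_a=\tau^{2a_1a_2}(S^{\otimes2})^{a_1}(C^{\otimes2})^{a_2}$, so each $\mbb{T}_a$ is one of the operators $\sum c_{kl}(S^{\otimes2})^k(C^{\otimes2})^l$; hence $\mbb{T}_a$ preserves every $H_i$ and commutes with every $K_i$, and $O_0^\dag\,\mbb{T}_a|_{H_0}\,O_0=\tau^{2a_1a_2}S^{a_1}(C^2)^{a_2}=\widetilde T_a$, where $\widetilde T_a:=T_{(a_1,2a_2)}\in\mcl L(H)$. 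As $d$ is odd, $a=(a_1,a_2)\mapsto(a_1,2a_2\bmod d)$ is a bijection of $[0..d-1]^2$ fixing $(0,0)$, so $\{\widetilde T_a\}_a=\{T_b\}_b$.

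I would then expand $\la b|\mbb{T}_a|b\ra=\sum_{i,j}\la b_i^0|K_i\mbb{T}_aK_j|b_j^0\ra$ using $K_j|b_j^0\ra=P_j|b\ra\in H_j$. Because $\mbb{T}_a$ maps each $H_j$ into itself while $H_i\perp H_j$ for $i\ne j$ (and $K_i$ swaps $H_0$ with $H_i$ and is the identity on $(H_0\oplus H_i)^\perp$), every cross term vanishes, and on the diagonal $K_i\mbb{T}_aK_i=\mbb{T}_aK_i^2=\mbb{T}_a$; hence, with $O_0^\dag O_0=I_H$ and $\Phi:=\sum_i|\psi_i\ra\la\psi_i|\in\mcl L(H)$,
\ea{
	\la b|\mbb{T}_a|b\ra=\sum_{i}\la b_i^0|\mbb{T}_a|b_i^0\ra=\sum_{i}\la\psi_i|\widetilde T_a|\psi_i\ra=\Tr\big(\widetilde T_a\,\Phi\big).
}
In particular the $a=(0,0)$ instance gives $\la b|b\ra=\Tr\Phi$, so (recalling $\la b|b\ra=1$ is assumed) $|b\ra\in\mbb{B}_{WH}$ is equivalent to $\Tr(T_b\Phi)=\delta_{b,(0,0)}$ for all $b\in[0..d-1]^2$.

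Since $\{\tfrac1{\sqrt d}T_b\}_b$ is an orthonormal basis of $\mcl L(H)$, the $d^2$ functionals $X\mapsto\Tr(T_bX)$ are linearly independent, so this system determines $\Phi$ uniquely; as $\Tr T_b=d\,\delta_{b,(0,0)}$, the solution is $\Phi=\tfrac1d I_H$. Thus $|b\ra\in\mbb{B}_{WH}$ iff $\Phi=\tfrac1d I_H$. To finish, arrange the $|\psi_i\ra$ as the columns of a $d\times d$ matrix $\Psi$; then $\Psi\Psi^\dag=\Phi$, while $\Psi^\dag\Psi$ is the Gram matrix $(\la\psi_i|\psi_j\ra)_{i,j}$. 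For a square matrix, $\Psi\Psi^\dag=\tfrac1d I$ forces $\sqrt d\,\Psi$ to be unitary, hence $\Psi^\dag\Psi=\tfrac1d I$, i.e.\ $\la\psi_i|\psi_j\ra=\tfrac1d\delta_{ij}$, equivalently $\la b_i^0|b_j^0\ra=\tfrac1d\delta_{ij}$ since $O_0$ is an isometry. The final (``in other words'') statement is then immediate, because $|b\ra\mapsto(K_iP_i|b\ra)_i$ and $(|c_i\ra)_i\mapsto\sum_iK_i|c_i\ra$ are mutually inverse bijections between $H\otimes H$ and tuples in $H_0^{\oplus d}$ (using $\sum_iP_i=I$, $K_i^2=I$, and $K_i(H_0)=H_i$). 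One could alternatively derive the result from Theorem~\ref{tm:uwh-odd} and the corollary $\mbb{B}_{WH}=\{U|0\ra F|0\ra\}_{U\in\mbb{U}_{WH}}$, but the argument above is self-contained.

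The step most prone to error is the expansion of $\la b|\mbb{T}_a|b\ra$: one must use that the $K_i$ commute with the $\mbb{T}_a$ \emph{specifically} (not with arbitrary operators on $H\otimes H$) and interchange $H_0$ with $H_i$, and that conjugation by $O_0$ converts $\mbb{T}_a|_{H_0}$ into a genuine displacement operator on $H$ — with the oddness of $d$ entering exactly through the reindexing $a_2\mapsto 2a_2$, which is why the even case must be treated separately. Everything after that is routine linear algebra on $\mcl L(H)$ and on $d\times d$ matrices.
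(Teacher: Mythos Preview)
Your proof is correct and follows essentially the same approach as the paper's: expand $\la b|\mbb T_a|b\ra$ via the decomposition $|b\ra=\sum_iK_i|b_i^0\ra$, use that $K_i$ commutes with $\mbb T_a$ and $K_i^2=I$ to kill the cross terms, obtain $\Tr(\mbb T_aB_0)=0$ for $a\neq(0,0)$ with $B_0=\sum_i|b_i^0\ra\la b_i^0|$ supported on $H_0$, and conclude $B_0|_{H_0}=\frac1d I$ via the orthonormal basis $\{T_b\}$. Your version is slightly more explicit in two places --- the reindexing $a_2\mapsto 2a_2$ that uses oddness of $d$, and the square-matrix step $\Psi\Psi^\dag=\frac1d I\Rightarrow\Psi^\dag\Psi=\frac1d I$ that passes from $B_0=\frac1d I$ to pairwise orthogonality --- both of which the paper leaves implicit.
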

\begin{proof}

One way to prove this is to combine theorems \ref{tm:uwh-group}, \ref{tm:uwh-odd} and prop. \ref{prop:0F0-bwh} as was originally done. Though there is a shorter proof. Let $|b\ra \in \mbb{B}_{WH}$ and $|b_i^0\ra = K_i P_i |b\ra$. For all $a \in [0..d-1]^2$, $a \neq (0,0)$ we have 
$$
0 = \langle b | \mbb{T}_a | b \rangle = \bigg(\sum_{i \in [-\frac{d-1}{2}..\frac{d-1}{2}]} \la b_i^0 | K_i \bigg) \cdot \mbb{T}_a \cdot \bigg(\sum_{j \in [-\frac{d-1}{2}..\frac{d-1}{2}]} K_j |b_j^0\ra \bigg) = 
$$
$$
= \bigg(\sum_{i \in [-\frac{d-1}{2}..\frac{d-1}{2}]} \la b_i^0 | K_i \bigg) \cdot \bigg(\sum_{j \in [-\frac{d-1}{2}..\frac{d-1}{2}]} K_j \mbb{T}_a |b_j^0\ra \bigg) = 
\sum_{i \in [-\frac{d-1}{2}..\frac{d-1}{2}]} \la b_i^0 | K_i K_i \mbb{T}_a |b_i^0\ra =
$$
\ea{
	= \sum_{i \in [-\frac{d-1}{2}..\frac{d-1}{2}]} \la b_i^0 | \mbb{T}_a |b_i^0\ra = \Tr \bigg( \mbb{T}_a \sum_{i \in [-\frac{d-1}{2}..\frac{d-1}{2}]} |b_i^0\ra \la b_i^0 | \bigg)
}

Let's denote 
\ea{
	B_0 = \sum_{i \in [-\frac{d-1}{2}..\frac{d-1}{2}]} |b_i^0\ra \la b_i^0 |
}

It's easy to see that $B_0=B_0^\dag$, $B_0 \geq 0$ and $B_0(H_0^\perp) = 0$. Note that 
\ea{
	\Tr(B_0) = \sum_{i \in [-\frac{d-1}{2}..\frac{d-1}{2}]}  \Tr \bigg( K_i P_i |b\ra \la b | P_i K_i \bigg) = 
	\Tr \bigg( |b\ra \la b | \cdot \sum_{i \in [-\frac{d-1}{2}..\frac{d-1}{2}]} P_i \bigg) = 1
}
Since $\Tr(\mbb{T}_a B_0) = 0$ for every $a \in [0..d-1]^2$, $a \neq (0,0)$, then $\forall (i,j) \in [0..d-1]^2$, $(i,j) \neq (0,0)$, we have  
$\Tr( O_0 \cdot S^iC^j \cdot O_0^\dag \cdot B_0 |_{H_0} ) = 0$. Hence $B_0 |_{H_0} = cI$ for some $c>0$. But $\Tr(B_0) = 1$ so $B_0 |_{H_0}  = \frac{1}{d}I$. It follows that vectors $|b_i^0\ra$ are orthogonal and have norm $\frac{1}{\sqrt{d}}$. 

Clearly, the reverse implication is also true, since for $|b\ra$ constructed from such $|b_i^0\ra \in H_0$ we have $\langle b | \mbb{T}_a | b \rangle = 0$, hence $|b\ra \in \mbb{B}_{WH}$. 

\end{proof}

Note that theorem \ref{tm:uwh-group} follows easily from this concrete description of $\mbb{B}_{WH}$ and from the description of $\mbb{U}_{WH}$ (theorem \ref{tm:uwh-odd}). 

It's also worth to note a simple corollary: the set $\mbb{B}_{WH}$ can be naturally parametrized by unitary matrices of size $d \times d$. 

\subsection{The case when $d$ is even} 

The general scheme is similar to the odd case, though this case is more technical. It was already noted that in this case $\{ C \otimes C, S \otimes S \}$ has 4 possible irreducible $\frac{d}{2}$-dimensional representations $\pi_{(1,1)} = \{ C_{\frac{d}{2}}, S_{\frac{d}{2}}\}$, $\pi_{(-1,1)} = \{ \omega C_{\frac{d}{2}}, S_{\frac{d}{2}}\}$, $\pi_{(1, -1)} = \{ C_{\frac{d}{2}}, \omega S_{\frac{d}{2}}\}$, $\pi_{(-1, -1)} = \{ \omega C_{\frac{d}{2}}, \omega S_{\frac{d}{2}}\}$ in its decomposition (here $C_{\frac{d}{2}}$ and $S_{\frac{d}{2}}$ are clock and shift matrices of size $\frac{d}{2} \times \frac{d}{2}$). In fact, every $\pi_{(\pm1,\pm1)}$ appears exactly $\frac{d}{2}$ times in the decomposition of $\{ C \otimes C, S \otimes S \}$. Below we present one such explicit decomposition. % , there is a total of $2d$ invariant parts of $H \otimes H$. 

%$\frac{d}{2}$

For $s=0$ denote

\ea{
	H_{0}^{(1,1)} = span\{ |j\ra|j\ra + |j+\frac{d}{2}\ra|j+\frac{d}{2}\ra\}_{ j\in [0..\frac{d}{2}-1] },
}
\ea{
	H_{0}^{(1,-1)} = span\{ |j\ra|j\ra - |j+\frac{d}{2}\ra|j+\frac{d}{2}\ra\}_{ j\in [0..\frac{d}{2}-1] }
}

and for $s \in [1..\frac{d}{2}-1]$ denote $\alpha = (-1)^s$,
\ea{
	H_{s}^{(\alpha,1)} = span\{ |j\ra|j+s\ra + |j+s\ra|j\ra + 
	\nn
	+ |j+\frac{d}{2}\ra|j+s+\frac{d}{2}\ra + |j+s+\frac{d}{2}\ra|j+\frac{d}{2}\ra \}_{ j\in [0..\frac{d}{2}-1] },
}
\ea{
	H_{s}^{(\alpha,-1)} = span\{ |j\ra|j+s\ra + |j+s\ra|j\ra -
	\nn
	- |j+\frac{d}{2}\ra|j+s+\frac{d}{2}\ra - |j+s+\frac{d}{2}\ra|j+\frac{d}{2}\ra \}_{ j\in [0..\frac{d}{2}-1] },
}

also 

\ea{
	H_{-s}^{(\alpha,1)} = span\{ |j\ra|j+s\ra - |j+s\ra|j\ra +
	\nn
	+ |j+\frac{d}{2}\ra|j+s+\frac{d}{2}\ra - |j+s+\frac{d}{2}\ra|j+\frac{d}{2}\ra \}_{ j\in [0..\frac{d}{2}-1] },
}
\ea{
	H_{-s}^{(\alpha,-1)} = span\{ |j\ra|j+s\ra - |j+s\ra|j\ra - 
	\nn
	- |j+\frac{d}{2}\ra|j+s+\frac{d}{2}\ra + |j+s+\frac{d}{2}\ra|j+\frac{d}{2}\ra \}_{ j\in [0..\frac{d}{2}-1] }.
}

Finally, for $s=\frac{d}{2}$ denote $\alpha = (-1)^s = (-1)^\frac{d}{2}$,
\ea{
	H_{\frac{d}{2}}^{(\alpha,1)} = span\{ |j\ra|j+\frac{d}{2}\ra + |j+\frac{d}{2}\ra|j\ra\}_{ j\in [0..\frac{d}{2}-1] },
}
\ea{
	H_{-\frac{d}{2}}^{(\alpha,-1)} = span\{ |j\ra|j+\frac{d}{2}\ra - |j+\frac{d}{2}\ra|j\ra\}_{ j\in [0..\frac{d}{2}-1] }.
}
Note that $H_{s}^{(\alpha,\beta)}$ are defined not for every possible combination of $(\alpha, \beta)=(\pm 1, \pm 1)$ and $s \in [-\frac{d}{2},..,\frac{d}{2}]$.

It's a straightforward routine to check that 
\begin{enumerate}
\item All defined $H_s^{(\alpha, \beta)}$ ($2d$ of them) have dimension $\frac{d}{2}$ 
\item $H \otimes H$ is the direct sum of all $H_s^{(\alpha, \beta)}$
\item $H_{sym}$ is the direct sum of all $H_s^{(\alpha, \beta)}$ where $s\geq 0$, ($d+1$ of them)
\item Each $H_s^{(\alpha, \beta)}$ is invariant subspace for operators $C\otimes C$ and $S \otimes S$ 
\item For each subspace $\mcl{H}=H_s^{(\alpha, \beta)}$: 
\ea{
	(C\otimes C)^\frac{d}{2} |_\mcl{H} = \alpha I,
}
\ea{
	(S\otimes S)^\frac{d}{2} |_\mcl{H} = \beta I
}

\item For each of the 4 possible values $(\alpha, \beta)=(\pm 1, \pm 1)$ there are exactly $\frac{d}{2}$ subspaces of the type $H_s^{(\alpha, \beta)}$

\end{enumerate}
It follows that $\forall U = \sum_{i,j\in [0..d-1]} c_{ij} (S^{\otimes2})^i(C^{\otimes2})^j$:
\ea{
	U = \bigoplus U |_{ H_s^{(\alpha, \beta)}}
}
Again we use Theorem VII.5.1 from \cite{Davidson} to conclude that for every $\mcl{H} = H_s^{(\alpha, \beta)}$ there is an isometry $O_s^{(\alpha, \beta)} : \mbb{C}^\frac{d}{2} \rightarrow \mcl{H}$ such that

\ea{
	(O_s^{(\alpha, \beta)})^\dag \cdot (C\otimes C) |_\mcl{H} \cdot O_s^{(\alpha, \beta)} = \omega^{ \frac{1-\alpha}{2}} C_{\frac{d}{2}} \text{ on } \mbb{C}^\frac{d}{2},
}
\ea{
	(O_s^{(\alpha, \beta)})^\dag \cdot (S\otimes S) |_\mcl{H} \cdot O_s^{(\alpha, \beta)} = \omega^{ \frac{1-\beta}{2}} S_{\frac{d}{2}} \text{ on } \mbb{C}^\frac{d}{2}.
}
%where $C_{\frac{d}{2}}$ and $S_{\frac{d}{2}}$ are clock and shift matrices of size $\frac{d}{2} \times \frac{d}{2}$. 

Now for every $(\alpha, \beta)=(\pm 1, \pm 1)$ let's mark a single subspace of the type $H_s^{(\alpha, \beta)}$ and denote it as $H_\bullet^{(\alpha, \beta)}$ (in the odd case we've just used $H_0$ as $H_\bullet$). We also denote
\ea{
	H_\bullet = H_\bullet^{(1, 1)} \oplus H_\bullet^{(-1, 1)} \oplus H_\bullet^{(1, -1)} \oplus H_\bullet^{(-1, -1)}
}

Let's define unitary operators $K_s^{(\alpha, \beta)} \in \mcl{U}(H \otimes H)$ that swap subspaces $H_s^{(\alpha, \beta)}$ and $H_\bullet^{(\alpha, \beta)}$ in accordance to $O_s^{(\alpha, \beta)}$, $O_\bullet^{(\alpha, \beta)}$:
\ea{
	K_s^{(\alpha, \beta)} |_{H_s^{(\alpha, \beta)}} = O_\bullet^{(\alpha, \beta)} \cdot (O_s^{(\alpha, \beta) })^\dag ,
	\nn
	K_s^{(\alpha, \beta)} |_{H_\bullet^{(\alpha, \beta)}} = O_s^{(\alpha, \beta)} \cdot (O_\bullet^{(\alpha, \beta)})^\dag ,
	\nn
	K_s^{(\alpha, \beta)} |_{(H_\bullet^{(\alpha, \beta)} \oplus H_s^{(\alpha, \beta)} )^\perp} = I. ~~~~~~~~~~~
}
Again we have that $(K_s^{(\alpha, \beta)})^2=I$, $(K_s^{(\alpha, \beta)})^\dag=K_s^{(\alpha, \beta)}$. 

Also for every $(\alpha, \beta) = (\pm1,\pm1)$ and every $s$:
\ea{
	K_s^{(\alpha, \beta)}P_s^{(\alpha, \beta)} = P_\bullet^{(\alpha, \beta)}K_s^{(\alpha, \beta)},
	\nn
	P_s^{(\alpha, \beta)}K_s^{(\alpha, \beta)} = K_s^{(\alpha, \beta)}P_\bullet^{(\alpha, \beta)},
}
where $P_s^{(\alpha, \beta)}$ is the projector on $H_s^{(\alpha, \beta)}$.

And for any $\forall U = \sum_{i,j\in [0..d-1]} c_{ij} (S^{\otimes2})^i(C^{\otimes2})^j$: 
\ea{
	K_s^{(\alpha, \beta)} U = U K_s^{(\alpha, \beta)}.
}

\begin{theorem}\label{tm:uwh-even}
Assume $d$ is even. Then 
\begin{enumerate}
\item
For any 4 unitary operators $U_{(\pm1,\pm1)} \in \mcl{U}(H_\bullet^{(\pm1,\pm1)})$ there exists a unique $U \in \mbb{U}_{WH}$ such that
\ea{
	U = \bigoplus_{ (\alpha,\beta)=(\pm1,\pm1)} \Bigg( \bigoplus_{s} K_s^{(\alpha,\beta)} \cdot U_{(\alpha,\beta)} \cdot K_s^{(\alpha,\beta)} |_{H_s^{(\alpha,\beta)}} \Bigg)
}
\item
Any operator $U \in \mbb{U}_{WH}$ has such form for 4 unitaries 
\ea{
	U_{(\pm1,\pm1)} = U |_{H_\bullet^{(\pm1,\pm1)}} \in \mcl{U}(H_\bullet^{(\pm1,\pm1)})
}
\end{enumerate}
\end{theorem}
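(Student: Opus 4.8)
The plan is to follow the pattern of Theorem \ref{tm:uwh-odd}, with the four canonical blocks $H_\bullet^{(\pm1,\pm1)}$ playing the role of the single block $H_0$ and the structure of the algebra $\mcl{A}$ generated by $\{C\otimes C,S\otimes S\}$ replacing the fact --- available only for odd $d$ --- that $C^2$ and $S$ generate $\mcl{L}(H)$. Let $\mcl{A}$ be the linear span of the $d^2$ monomials $(S^{\otimes 2})^i(C^{\otimes 2})^j$, $i,j\in[0..d-1]$; using $(C\otimes C)^d=(S\otimes S)^d=I$ and $(C\otimes C)(S\otimes S)=\omega^2(S\otimes S)(C\otimes C)$, $\mcl{A}$ is a unital algebra, and since $\mbb{T}_i=\omega^{i_1i_2}(S^{\otimes 2})^{i_1}(C^{\otimes 2})^{i_2}$, the group $\mbb{U}_{WH}$ is precisely the set of unitary elements of $\mcl{A}$. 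Each monomial commutes with every $K_s^{(\alpha,\beta)}$ and $(K_s^{(\alpha,\beta)})^2=I$, so the intertwining relation gives, for all $U\in\mcl{A}$, that $U|_{H_s^{(\alpha,\beta)}}=K_s^{(\alpha,\beta)}(U|_{H_\bullet^{(\alpha,\beta)}})K_s^{(\alpha,\beta)}|_{H_s^{(\alpha,\beta)}}$, exactly as in the odd case. This establishes statement (2) (with $U_{(\pm1,\pm1)}=U|_{H_\bullet^{(\pm1,\pm1)}}$, which is unitary when $U$ is) and shows that the decomposition in (1) is the only candidate; hence (1) reduces to bijectivity of the evaluation map $\Phi\colon\mcl{A}\to\bigoplus_{(\alpha,\beta)=(\pm1,\pm1)}\mcl{L}(H_\bullet^{(\alpha,\beta)})$ sending $U$ to $(U|_{H_\bullet^{(\alpha,\beta)}})_{(\alpha,\beta)}$.

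The target of $\Phi$ has dimension $4(\frac{d}{2})^2=d^2$ and $\mcl{A}$ is spanned by $d^2$ monomials, so it suffices to prove $\Phi$ surjective, which then forces bijectivity. Conjugating each block by the isometry $O_\bullet^{(\alpha,\beta)}$, surjectivity becomes the assertion that for arbitrary operators $W_{(\alpha,\beta)}$ on $\mbb{C}^{d/2}$ one can choose scalars $c_{ij}$, $i,j\in[0..d-1]$, with $\sum_{i,j}c_{ij}(\omega^{\frac{1-\beta}{2}}S_{d/2})^{i}(\omega^{\frac{1-\alpha}{2}}C_{d/2})^{j}=W_{(\alpha,\beta)}$ for all four $(\alpha,\beta)$. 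Writing $i=i'+\frac{d}{2}\epsilon_1$ and $j=j'+\frac{d}{2}\epsilon_2$ with $i',j'\in[0..\frac{d}{2}-1]$, $\epsilon_1,\epsilon_2\in\{0,1\}$, and using $(\omega^{\frac{1-\beta}{2}}S_{d/2})^{d/2}=\beta I$ and $(\omega^{\frac{1-\alpha}{2}}C_{d/2})^{d/2}=\alpha I$, the left-hand side becomes $\sum_{i',j'}\big(\sum_{\epsilon_1,\epsilon_2}\beta^{\epsilon_1}\alpha^{\epsilon_2}c_{i'+\frac{d}{2}\epsilon_1,\,j'+\frac{d}{2}\epsilon_2}\big)(\omega^{\frac{1-\beta}{2}}S_{d/2})^{i'}(\omega^{\frac{1-\alpha}{2}}C_{d/2})^{j'}$. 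Thus $\Phi$, in these coordinates, is a $(\mbb{Z}/2)^2$ Fourier transform in $(\epsilon_1,\epsilon_2)$ (invertible) followed blockwise by the expansion of $W_{(\alpha,\beta)}$ in the basis $\{S_{d/2}^{i'}C_{d/2}^{j'}\}_{i',j'\in[0..\frac{d}{2}-1]}$ of $\mcl{L}(\mbb{C}^{d/2})$ --- a basis because $\omega^2$ is a primitive $\frac{d}{2}$-th root of unity, the phases $\omega^{\frac{1-\alpha}{2}},\omega^{\frac{1-\beta}{2}}$ being immaterial to linear independence. Hence $\Phi$ is a bijection.

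Given four unitaries $U_{(\pm1,\pm1)}\in\mcl{U}(H_\bullet^{(\pm1,\pm1)})$, let $U\in\mcl{A}$ be the unique preimage under $\Phi$ of this tuple. Then $U$ preserves every $H_s^{(\alpha,\beta)}$, and by the intertwining identity $U|_{H_s^{(\alpha,\beta)}}=K_s^{(\alpha,\beta)}U_{(\alpha,\beta)}K_s^{(\alpha,\beta)}|_{H_s^{(\alpha,\beta)}}$ for every $s$, so $U=\bigoplus_{(\alpha,\beta)}\bigoplus_s K_s^{(\alpha,\beta)}U_{(\alpha,\beta)}K_s^{(\alpha,\beta)}|_{H_s^{(\alpha,\beta)}}$, the asserted form. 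On each summand $U$ is unitarily equivalent to the unitary $U_{(\alpha,\beta)}$, so $U$ is unitary on $H\otimes H$ and therefore $U\in\mbb{U}_{WH}$; uniqueness is the injectivity of $\Phi$.

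The only step carrying real content is the bijectivity of $\Phi$ --- that the $d^2$ monomials $(S^{\otimes 2})^i(C^{\otimes 2})^j$ fill the four canonical blocks precisely, once all the phase factors $\tau,\omega^{\frac{1-\alpha}{2}},\omega^{\frac{1-\beta}{2}}$ are tracked; everything else is formal transport along the isometries $O_s^{(\alpha,\beta)}$ and the swaps $K_s^{(\alpha,\beta)}$, just as for odd $d$. One can bypass the explicit Fourier computation by invoking the structure theorem directly: by \cite{Davidson}, Theorem VII.5.1, $\{C\otimes C,S\otimes S\}$ is a direct sum of the four inequivalent irreducibles $\pi_{(\pm1,\pm1)}$, each of multiplicity $\frac{d}{2}$, whence $\mcl{A}\cong\bigoplus_{(\pm1,\pm1)}M_{d/2}(\mbb{C})$ and $\Phi$ is the accompanying isomorphism.
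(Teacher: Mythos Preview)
Your proof is correct and follows essentially the same route as the paper. Both reduce statement (1) to showing that the restrictions $U|_{H_\bullet^{(\pm1,\pm1)}}$ can be prescribed independently, and both establish this by splitting each index pair $(i,j)\in[0..d-1]^2$ as $(i',j')+\frac{d}{2}(\epsilon_1,\epsilon_2)$ and solving, for each fixed $(i',j')$, a $4\times4$ linear system in the four coefficients $c_{i'+\frac{d}{2}\epsilon_1,\,j'+\frac{d}{2}\epsilon_2}$; the paper writes out the resulting matrix explicitly and row-reduces it to a determinant-$16$ Hadamard matrix, whereas you recognize the same system as the $(\mathbb{Z}/2)^2$ Fourier transform --- these are literally the same invertible $4\times4$ matrix. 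Your additional remark that one could bypass the computation by invoking the structure theorem (so that $\mcl{A}\cong\bigoplus M_{d/2}(\mbb{C})$ directly) is a nice shortcut not spelled out in the paper's proof, though it is implicit in the representation-theoretic setup preceding the theorem.
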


\begin{proof}
Similar to the odd case. The only thing needs to be explained is why 4 parts of $U \in \mbb{U}_{WH}$:
\ea{
	U |_{H_\bullet^{(1,1)}}, U |_{H_\bullet^{(-1,1)}}, U |_{H_\bullet^{(1,-1)}}, U |_{H_\bullet^{(-1,-1)}} 
} 
can be any 4 unitaries on the respective subspaces. We have that $\forall (\alpha,\beta) = (\pm1,\pm1)$, $c_{ij} \in \mbb{C}, i,j \in [0..d-1]$:
\ea{
	c_{ij}(S^{\otimes2})^i(C^{\otimes2})^j |_{H_\bullet^{(\alpha,\beta)}} 
= O_\bullet^{(\alpha, \beta)} \cdot c_{ij} (\omega^{\frac{1-\beta}{2}})^i (\omega^{\frac{1-\alpha}{2}})^j S_{\frac{d}{2}}^i C_{\frac{d}{2}}^j \cdot (O_\bullet^{(\alpha, \beta)})^\dag
}

Thus for $i,j \in [0..\frac{d}{2}-1]$:
$$
\bigg( c_{i,j}(S^{\otimes2})^i(C^{\otimes2})^j + c_{i+\frac{d}{2}, j}(S^{\otimes2})^{i+\frac{d}{2}}(C^{\otimes2})^j + 
$$
$$
+ c_{i,j+\frac{d}{2}}(S^{\otimes2})^i(C^{\otimes2})^{j+\frac{d}{2}} + c_{i+\frac{d}{2},j+\frac{d}{2}}(S^{\otimes2})^{i+\frac{d}{2}}(C^{\otimes2})^{j+\frac{d}{2}} \bigg) |_{H_\bullet^{(\alpha,\beta)}} =
$$
$$
= \bigg( c_{i,j} (\omega^{\frac{1-\beta}{2}})^i (\omega^{\frac{1-\alpha}{2}})^j + c_{i+\frac{d}{2},j} (\omega^{\frac{1-\beta}{2}})^{i+\frac{d}{2}} (\omega^{\frac{1-\alpha}{2}})^j + 
$$
$$
+ c_{i,j+\frac{d}{2}} (\omega^{\frac{1-\beta}{2}})^i (\omega^{\frac{1-\alpha}{2}})^{j+\frac{d}{2}} + c_{i+\frac{d}{2},j+\frac{d}{2}} (\omega^{\frac{1-\beta}{2}})^{i+\frac{d}{2}} (\omega^{\frac{1-\alpha}{2}})^{j+\frac{d}{2}} \bigg) 
\times 
$$
\ea{
	\times O_\bullet^{(\alpha, \beta)} \cdot S_{\frac{d}{2}}^i C_{\frac{d}{2}}^j \cdot (O_\bullet^{(\alpha, \beta)})^\dag = \chi_{i,j}^{(\alpha, \beta)} \cdot O_\bullet^{(\alpha, \beta)} \cdot S_{\frac{d}{2}}^i C_{\frac{d}{2}}^j \cdot (O_\bullet^{(\alpha, \beta)})^\dag 
}
Here the complex number $\chi_{i,j}^{(\alpha, \beta)}$ is just the corresponding expression that depends on
$c_{i,j}$, $c_{i+\frac{d}{2},j}$, $c_{i,j+\frac{d}{2}}$, $c_{i+\frac{d}{2},j+\frac{d}{2}}$. 
For any $i,j \in [0..\frac{d}{2}-1]$ four vectors $v_{ij}^{(\pm1,\pm1)} \in \mbb{C}^4$, where  
\ea{
	v_{ij}^{(\alpha,\beta)} = \bigg( (\omega^{\frac{1-\beta}{2}})^i (\omega^{\frac{1-\alpha}{2}})^j, (\omega^{\frac{1-\beta}{2}})^{i+\frac{d}{2}} (\omega^{\frac{1-\alpha}{2}})^j, 
	\nn
	(\omega^{\frac{1-\beta}{2}})^i (\omega^{\frac{1-\alpha}{2}})^{j+\frac{d}{2}}, (\omega^{\frac{1-\beta}{2}})^{i+\frac{d}{2}} (\omega^{\frac{1-\alpha}{2}})^{j+\frac{d}{2}} \bigg),
}
are linearly independent. Hence $\forall i,j \in [0..\frac{d}{2}-1]$ for any 4 complex numbers $\chi_{i,j}^{(\pm1, \pm1)}$ we can always find 4 corresponding numbers $c_{i,j},c_{i+\frac{d}{2},j},c_{i,j+\frac{d}{2}},c_{i+\frac{d}{2},j+\frac{d}{2}}$.  

To see why $v_{ij}^{(1,1)}$, $v_{ij}^{(1,-1)}$,$v_{ij}^{(-1,1)}$,$v_{ij}^{(-1,-1)}$ are linearly independent consider the corresponding matrix 

\ea{
	M_{ij} =  \bmt v_{ij}^{(1,1)} \\ v_{ij}^{(1,-1)} \\ v_{ij}^{(-1,1)} \\ v_{ij}^{(-1,-1)} \emt 
= \bmt 
1 & 1 & 1 & 1 \\ 
\omega^i & -\omega^i & \omega^i & -\omega^i \\ 
\omega^j & \omega^j & -\omega^j & -\omega^i \\
\omega^{i+j} & -\omega^{i+j} & -\omega^{i+j} & \omega^{i+j} \\
\emt
}
 
This matrix has the same rank as the matrix 
\ea{
	\bmt 
1 & 1 & 1 & 1 \\ 
1 & -1 & 1 & -1 \\ 
1 & 1 & -1 & -1 \\
1 & -1 & -1 & 1 \\
\emt
}
which has det = 16.   

\end{proof}

Let's describe the set $\mbb{B}_{WH}$ in the even case. 

As $P_s^{(\alpha,\beta)}$ we denote the ortho-projection on $H_s^{(\alpha,\beta)}$ (hence $P_s^{(\alpha,\beta)} \mbb{T}_a = \mbb{T}_a P_s^{(\alpha,\beta)}$). 

\begin{theorem}\label{tm:bwh-even} {(The description of the set $\mbb{B}_{WH}$ in the even case)} \quad\\
Assume $d$ is even. Let $|b\ra \in H \otimes H$ with $\la b | b \ra=1$. 
For $(\alpha,\beta) = (\pm1, \pm1)$ and $s \in [-\frac{d}{2},..,\frac{d}{2}]$ define 
\ea{
	|b_s^{(\alpha,\beta)}\ra = K_s^{(\alpha,\beta)} P_s^{(\alpha,\beta)} |b\ra,
}
so $|b_s^{(\alpha,\beta)}\ra \in H_\bullet^{(\alpha,\beta)}$ and $|b\ra = \sum K_s^{(\alpha,\beta)} |b_s^{(\alpha,\beta)}\ra$. Then $|b\ra \in \mbb{B}_{WH}$ if and only if 
\ea{
	\forall (\alpha,\beta) = (\pm1, \pm1) ~~~ \forall s,t \in [-\frac{d}{2},..,\frac{d}{2}] ~:~ \la b_s^{(\alpha,\beta)} | b_t^{(\alpha,\beta)} \ra = \frac{1}{2d} \delta_{st}
}
i.e. to construct $|b\ra \in \mbb{B}_{WH}$ we can pick any orthonormal basis from $H_\bullet^{(\alpha,\beta)}$ for every $(\alpha,\beta) = (\pm1, \pm1)$. 
\end{theorem}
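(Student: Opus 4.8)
The plan is to imitate the odd-case argument (Theorem~\ref{tm:bwh-odd}), with the single block $H_0$ replaced by the four blocks $H_\bullet^{(\alpha,\beta)}$. Since $\sum_{(\alpha,\beta),s}P_s^{(\alpha,\beta)}=I$ we have $|b\ra=\sum_{(\alpha,\beta),s}K_s^{(\alpha,\beta)}|b_s^{(\alpha,\beta)}\ra$ with $|b_s^{(\alpha,\beta)}\ra\in H_\bullet^{(\alpha,\beta)}$; put $B_\bullet^{(\alpha,\beta)}:=\sum_s|b_s^{(\alpha,\beta)}\ra\la b_s^{(\alpha,\beta)}|$, an operator on $H_\bullet^{(\alpha,\beta)}$, and $B_\bullet:=\bigoplus_{(\alpha,\beta)}B_\bullet^{(\alpha,\beta)}$. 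The first step is the identity, valid for every $a\in\mbb{Z}^2$,
\ea{
	\la b|\mbb{T}_a|b\ra=\sum_{(\alpha,\beta),s}\la b_s^{(\alpha,\beta)}|\mbb{T}_a|b_s^{(\alpha,\beta)}\ra=\Tr(\mbb{T}_aB_\bullet).
}
Indeed $\mbb{T}_a=\omega^{a_1a_2}(S^{\otimes 2})^{a_1}(C^{\otimes 2})^{a_2}$ commutes with each $K_t^{(\alpha',\beta')}$ and preserves each $H_\bullet^{(\alpha',\beta')}$, so in the general term $\la b_s^{(\alpha,\beta)}|K_s^{(\alpha,\beta)}\mbb{T}_aK_t^{(\alpha',\beta')}|b_t^{(\alpha',\beta')}\ra$ the vector $K_s^{(\alpha,\beta)}K_t^{(\alpha',\beta')}\mbb{T}_a|b_t^{(\alpha',\beta')}\ra$ sits in a block orthogonal to $H_\bullet^{(\alpha,\beta)}$ unless $(\alpha',\beta',t)=(\alpha,\beta,s)$, in which case it equals $\mbb{T}_a|b_s^{(\alpha,\beta)}\ra$ because $(K_s^{(\alpha,\beta)})^2=I$. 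Also $\Tr(B_\bullet)=\sum_{(\alpha,\beta),s}\la b|P_s^{(\alpha,\beta)}|b\ra=\la b|b\ra=1$.

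For the forward implication, assume $|b\ra\in\mbb{B}_{WH}$, so $\Tr(\mbb{T}_aB_\bullet)=0$ for all $a\in[0..d-1]^2$, $a\neq(0,0)$. The operator $B_\bullet$ lies in the block-diagonal $*$-subalgebra $\mcl{A}:=\bigoplus_{(\alpha,\beta)}O_\bullet^{(\alpha,\beta)}\mcl{L}(\mbb{C}^{d/2})(O_\bullet^{(\alpha,\beta)})^\dag$, which has dimension $4(d/2)^2=d^2$. From the isometry relations for $O_\bullet^{(\alpha,\beta)}$ one gets $\mbb{T}_a|_{H_\bullet^{(\alpha,\beta)}}=\chi_a^{(\alpha,\beta)}\cdot O_\bullet^{(\alpha,\beta)}S_{d/2}^{a_1}C_{d/2}^{a_2}(O_\bullet^{(\alpha,\beta)})^\dag$ with $\chi_a^{(\alpha,\beta)}=\omega^{a_1a_2+\frac{1-\beta}{2}a_1+\frac{1-\alpha}{2}a_2}$; using $\omega^{d/2}=-1$ a short computation then shows the $d^2$ operators $\mbb{T}_a|_{H_\bullet}$, $a\in[0..d-1]^2$, to be mutually Hilbert--Schmidt-orthogonal, each of squared norm $\Tr(P_\bullet)=2d$, hence an orthogonal basis of $\mcl{A}$. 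The only non-obvious vanishings needed for this, $\Tr(\mbb{T}_a|_{H_\bullet})=0$ for $a\in\{(d/2,0),(0,d/2),(d/2,d/2)\}$, come down to $\sum_{(\alpha,\beta)}\alpha=\sum_{(\alpha,\beta)}\beta=\sum_{(\alpha,\beta)}\alpha\beta=0$, and the (equivalent) linear independence of the four $\mbb{T}_{(i+md/2,\,j+nd/2)}|_{H_\bullet}$, $(m,n)\in\{0,1\}^2$, attached to a fixed residue $(i,j)=(a_1,a_2)\bmod(d/2,d/2)$ is governed by precisely the invertible ($\det=\pm16$) Hadamard-type sign matrix already met in the proof of Theorem~\ref{tm:uwh-even}. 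Expanding $B_\bullet$ in this orthogonal basis, the coefficient of $\mbb{T}_a|_{H_\bullet}$ equals $\frac{1}{2d}\Tr(\mbb{T}_{-a}B_\bullet)=\frac{1}{2d}\la b|\mbb{T}_{-a}|b\ra$, which is $0$ for $a\neq(0,0)$ (as $\mbb{T}_{-a}=\mbb{T}_i$ for $i=(-a)\bmod d\in[0..d-1]^2\setminus\{(0,0)\}$, using $\epsilon^2=1$) and $\frac{1}{2d}$ for $a=(0,0)$. Hence $B_\bullet=\frac{1}{2d}P_\bullet$, i.e. $B_\bullet^{(\alpha,\beta)}=\frac{1}{2d}P_\bullet^{(\alpha,\beta)}$ for each $(\alpha,\beta)$.

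The conclusion is then immediate in both directions. For fixed $(\alpha,\beta)$ there are exactly $\frac d2$ vectors $|b_s^{(\alpha,\beta)}\ra$ (one per subspace $H_s^{(\alpha,\beta)}$, of which there are $\frac d2$, as listed in the decomposition above), and they lie in the $\frac d2$-dimensional space $H_\bullet^{(\alpha,\beta)}$; arranging them as the columns of a square matrix $V$, the relation $B_\bullet^{(\alpha,\beta)}=VV^\dag=\frac{1}{2d}P_\bullet^{(\alpha,\beta)}$ forces $V^\dag V=\frac{1}{2d}I$, i.e. $\la b_s^{(\alpha,\beta)}|b_t^{(\alpha,\beta)}\ra=\frac{1}{2d}\delta_{st}$, which is the claimed condition. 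Conversely, if these Gram relations hold, then again $B_\bullet^{(\alpha,\beta)}=\frac{1}{2d}P_\bullet^{(\alpha,\beta)}$, so by the first step $\la b|\mbb{T}_a|b\ra=\frac{1}{2d}\Tr(\mbb{T}_a|_{H_\bullet})=\frac{1}{2d}\Tr(S_{d/2}^{a_1}C_{d/2}^{a_2})\sum_{(\alpha,\beta)}\chi_a^{(\alpha,\beta)}$, and this vanishes for every $a\in[0..d-1]^2$, $a\neq(0,0)$: either $\Tr(S_{d/2}^{a_1}C_{d/2}^{a_2})=0$ (when $(a_1,a_2)\not\equiv(0,0)\bmod(d/2,d/2)$), or we are in one of the three exceptional residues and the factor $\sum_{(\alpha,\beta)}\chi_a^{(\alpha,\beta)}$ is one of $\sum\alpha$, $\sum\beta$, $\sum\alpha\beta$, all zero. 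Thus $|b\ra\in\mbb{B}_{WH}$.

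The only genuine work is the phase bookkeeping in the forward implication --- computing $\mbb{T}_a|_{H_\bullet^{(\alpha,\beta)}}$ and verifying non-degeneracy of the associated $4\times4$ sign matrix --- but this is essentially a re-run of the matrix computation already done in the proof of Theorem~\ref{tm:uwh-even}, so it is routine rather than hard. A strictly parallel alternative to the ``orthogonal basis of $\mcl{A}$'' step is to group the $\mbb{T}_a$ into $d^2/4$ quadruples sharing a residue $(a_1,a_2)\bmod(d/2,d/2)$, invert that $4\times4$ matrix to recover each $\Tr\big(O_\bullet^{(\alpha,\beta)}S_{d/2}^{i}C_{d/2}^{j}(O_\bullet^{(\alpha,\beta)})^\dag B_\bullet^{(\alpha,\beta)}\big)$ for $(i,j)\neq(0,0)$, and handle the residue $(0,0)$ separately with $\Tr(B_\bullet)=1$, thereby reproducing the odd-case argument block by block.
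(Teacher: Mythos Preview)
Your proof is correct and follows exactly the approach the paper sketches (``similarly to the odd case, define $B_\bullet$ and show $\Tr(\mbb{T}_aB_\bullet)=0$, hence $B_\bullet=\frac{1}{2d}I$ on $H_\bullet$''), but you actually carry out the one step the paper leaves implicit: verifying that the $d^2$ operators $\mbb{T}_a|_{H_\bullet}$ span the block-diagonal algebra $\mcl{A}$, which in the odd case was the trivial bijection $a_2\mapsto 2a_2$ but here requires the $4\times4$ sign computation already done in Theorem~\ref{tm:uwh-even}. Your Hilbert--Schmidt-orthogonality formulation and the explicit reverse direction are cleaner than what the paper writes, but the underlying strategy is identical.
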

\begin{proof}
Similarly to the odd case we can define operator $B_\bullet$:
$$
B_\bullet = \sum_{(\alpha,\beta) = (\pm1, \pm1)} \sum_s |b_s^{(\alpha,\beta)}\ra \la b_s^{(\alpha,\beta)} |
$$
and prove that $\Tr(\mbb{T}_a B_\bullet)=0$ for $\forall a\neq(0,0)$. Hence $B_\bullet$ is equal to $\frac{1}{2d}I$ on $H_\bullet$.

\end{proof}

It follows that in this case $\mbb{B}_{WH}$ can be parametrized by 4 unitary matrices of size $\frac{d}{2} \times \frac{d}{2}$. 

\section{SIC-POVM tensor square as a projection of a WH-type basis}
\label{sec:main-tm}

Now we are ready to prove our main result

\begin{theorem}\label{tm:main} (Main Theorem) % - reconstruction of a basis from SIC)

Let $|f\ra \in H$ be a fiducial vector of a SIC of WH-type. Then there exists vector $|b\ra \in \mbb{B}_{WH}$ such that 
\ea{
	|f\ra|f\ra = \sqrt{ \frac{2d}{d+1} } P_{sym}|b\ra
} 
In other words, $\{ \mbb{T}_i |f\ra|f\ra \}_{i \in [0..d-1]^2}$ is the scaled projection of the basis $\{ \mbb{T}_i |b\ra \}_{i \in [0..d-1]^2}$ onto $H_{sym}$.
\end{theorem}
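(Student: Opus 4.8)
The plan is to derive this from the description of $\mbb{B}_{WH}$ obtained above (Theorem \ref{tm:bwh-odd} for odd $d$, Theorem \ref{tm:bwh-even} for even $d$). First note it is enough to produce some $|b\ra \in \mbb{B}_{WH}$ with $P_{sym}|b\ra = c\,|f\ra|f\ra$, where $c = \sqrt{(d+1)/(2d)}$, since then $\sqrt{2d/(d+1)}\,P_{sym}|b\ra = c^{-1}c\,|f\ra|f\ra = |f\ra|f\ra$. By Theorem \ref{tm:bwh-odd} (resp. \ref{tm:bwh-even}), building such a $|b\ra$ means choosing an orthogonal basis of norm-$\tfrac{1}{\sqrt d}$ (resp. norm-$\tfrac{1}{\sqrt{2d}}$) vectors in $H_0$ (resp. in each $H_\bullet^{(\alpha,\beta)}$); the constraint $P_{sym}|b\ra = c\,|f\ra|f\ra$ forces the basis vectors attached to the summands lying in $H_{sym}$ to equal $c\,K_iP_i|f\ra|f\ra$ (resp. $c\,K_s^{(\alpha,\beta)}P_s^{(\alpha,\beta)}|f\ra|f\ra$), while those attached to the summands in $H_{asym}$ remain free. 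So the whole theorem reduces to checking that the forced vectors already form a partial orthogonal system of the right norm — then one extends it arbitrarily inside the reference subspace(s) and reads off $|b\ra$ via the description theorem.

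The heart of the argument is a twirling computation. Take $d$ odd and set $w_i = O_i^\dag|f\ra|f\ra \in H$ for $i \in [-\tfrac{d-1}{2}..\tfrac{d-1}{2}]$, so $K_iP_i|f\ra|f\ra = O_0 w_i$. The defining intertwining property of the $O_i$ gives $O_i^\dag\mbb{T}_k = R_k O_i^\dag$ with $R_k := \tau^{2k_1k_2}S^{k_1}C^{2k_2}$ \emph{independent of $i$}, hence $O_i^\dag\mbb{T}_k|f\ra|f\ra = R_k w_i$. Apply $O_i^\dag(\,\cdot\,)O_j$ to the SIC identity of Theorem \ref{tm:sic-sym-crit}, i.e. to $\sum_k \mbb{T}_k|f\ra|f\ra\la f|\la f|\mbb{T}_{-k} = \tfrac{2d}{d+1}P_{sym}$. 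The left-hand side becomes $\sum_k R_k(w_iw_j^\dag)R_k^\dag$; because $d$ is odd, $k_2 \mapsto 2k_2$ permutes $\mbb{Z}/d$, so $\{R_k\}$ runs (up to unimodular factors that cancel under conjugation) over the operator basis $\{S^aC^b\}$, and this twirl equals $d\,\Tr(w_iw_j^\dag)I = d\,\la w_j|w_i\ra I$. The right-hand side is $\tfrac{2d}{d+1}O_i^\dag P_{sym}O_j = \tfrac{2d}{d+1}O_i^\dag O_j$, which for $i,j \ge 0$ is $\tfrac{2d}{d+1}\delta_{ij}I$ since the $H_i$ with $i \ge 0$ are mutually orthogonal subspaces of $H_{sym}$. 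Comparing, $\la w_i|w_j\ra = \tfrac{2}{d+1}\delta_{ij}$ for $i,j \ge 0$, so $\la c\,O_0w_i\,|\,c\,O_0w_j\ra = c^2\la w_i|w_j\ra = \tfrac1d\delta_{ij}$ — exactly the partial orthogonality required by Theorem \ref{tm:bwh-odd}.

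To close the odd case: the $\tfrac{d+1}{2}$ forced vectors $c\,K_iP_i|f\ra|f\ra$, $0 \le i \le \tfrac{d-1}{2}$, form an orthogonal set of norm $\tfrac{1}{\sqrt d}$ in the $d$-dimensional space $H_0$, so extend it to a full orthogonal basis $\{|b_i^0\ra\}_{-(d-1)/2 \le i \le (d-1)/2}$ of $H_0$ with each $\||b_i^0\ra\| = \tfrac{1}{\sqrt d}$ and put $|b\ra = \sum_i K_i|b_i^0\ra \in \mbb{B}_{WH}$. Using $K_i^2 = I$ and $\sum_{i\ge0}P_i = P_{sym}$ one gets $P_{sym}|b\ra = \sum_{i\ge0}K_i|b_i^0\ra = c\sum_{i\ge0}P_i|f\ra|f\ra = c\,|f\ra|f\ra$, as needed. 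The even case is handled in the same way family by family: with $w_s^{(\alpha,\beta)} = (O_s^{(\alpha,\beta)})^\dag|f\ra|f\ra$ one has $(O_s^{(\alpha,\beta)})^\dag\mbb{T}_k|f\ra|f\ra = \tilde R_k^{(\alpha,\beta)}w_s^{(\alpha,\beta)}$ with $\tilde R_k^{(\alpha,\beta)} = \tau^{2k_1k_2}\omega^{(1-\beta)k_1/2}\omega^{(1-\alpha)k_2/2}S_{\frac{d}{2}}^{k_1}C_{\frac{d}{2}}^{k_2}$, which depends on $(\alpha,\beta)$ only through a unimodular factor, so conjugation wipes out the family-dependence; the induced twirl over $\mbb{C}^{d/2}$ (now $4$-to-$1$ because $S_{\frac{d}{2}},C_{\frac{d}{2}}$ have period $\tfrac{d}{2}$) is $2d\,\Tr(\,\cdot\,)I$ when $(\alpha,\beta)=(\alpha',\beta')$ and $0$ otherwise, matching $\tfrac{2d}{d+1}O_s^{(\alpha,\beta)\dag}O_t^{(\alpha',\beta')}$. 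This gives $\la w_s^{(\alpha,\beta)}|w_t^{(\alpha,\beta)}\ra = \tfrac{1}{d+1}\delta_{st}$ for $s,t\ge0$ in one family, hence the vectors $c\,K_s^{(\alpha,\beta)}P_s^{(\alpha,\beta)}|f\ra|f\ra$ with $s \ge 0$ are orthogonal of norm $\tfrac{1}{\sqrt{2d}}$; complete each $H_\bullet^{(\alpha,\beta)}$ to such a basis, define $|b\ra$ by Theorem \ref{tm:bwh-even}, and verify $P_{sym}|b\ra = c\,|f\ra|f\ra$ exactly as before.

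I expect the twirling step to be the main obstacle: one must make rigorous that $O_i^\dag\mbb{T}_kO_j$ (and its even-case analogue) is controlled by a single operator family regardless of the labels, and that this family, modulo conjugation-invariant phases, sweeps out the full Weyl--Heisenberg operator basis so that $\sum_k R_kXR_k^\dag \propto \Tr(X)I$. The rest is formal manipulation of the projectors $P_i$, the operators $K_i$ and the description Theorems \ref{tm:bwh-odd}, \ref{tm:bwh-even}; in the even case the extra work is the phase bookkeeping in $\tilde R_k^{(\alpha,\beta)}$ and keeping track of the fourfold multiplicity of the representations $\pi_{(\pm1,\pm1)}$.
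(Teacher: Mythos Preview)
Your proposal is correct and follows the same architecture as the paper's proof: reduce to showing that the vectors $c\,K_iP_i|f\ra|f\ra$ (resp.\ $c\,K_s^{(\alpha,\beta)}P_s^{(\alpha,\beta)}|f\ra|f\ra$) for $i\ge 0$ (resp.\ $s\ge 0$) satisfy the orthogonality conditions of Theorem~\ref{tm:bwh-odd}/\ref{tm:bwh-even}, then complete to a full basis on the antisymmetric side. The only difference is in execution --- the paper computes the inner products $\mu_{ij}$ directly as traces and uses the commutation $K_i\mbb{T}_a=\mbb{T}_aK_i$, $P_i\mbb{T}_a=\mbb{T}_aP_i$ to average over $a$ and invoke the SIC identity $\sum_a\mbb{T}_a|f\ra|f\ra\la f|\la f|\mbb{T}_{-a}=\tfrac{2d}{d+1}P_{sym}$, which sidesteps your anticipated ``main obstacle'' (verifying that the pulled-back operators $R_k$ sweep out a full Weyl basis) entirely.
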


\begin{proof} \quad\\

1) The case when $d$ is odd. 

First of all, note that 

\ea{
	|f\ra|f\ra =  \sum_{i \in [0..\frac{d-1}{2}]} P_i |f\ra|f\ra
}

For $i \in [0..\frac{d-1}{2}]$ let's define
\ea{
	|b_i^0\ra = \sqrt{ \frac{d+1}{2d} } K_i P_i |f\ra|f\ra
}

Clearly, $|b_i^0\ra \in H_0$, and we can write 

\ea{
	|f\ra|f\ra = \sqrt{ \frac{2d}{d+1} } \sum_{i \in [0..\frac{d-1}{2}]} K_i |b_i^0\ra
}

What we are going to prove is that $\forall i \in [0..\frac{d-1}{2}]$ vectors $|b_i^0\ra$ are orthogonal to each other and have norm $\frac{1}{\sqrt{d}}$. Therefore we can complete them to a full basis on $H_0$ by picking any suitable $|b_i^0\ra$ for $i \in [-\frac{d-1}{2},..,-1]$. Then for a vector $|b\ra = \sum_{i \in [-\frac{d-1}{2}..\frac{d-1}{2}]} K_i | b_i^0\ra \in \mbb{B}_{WH}$ we will have the required identity 
\ea{
	|f\ra|f\ra = \sqrt{ \frac{2d}{d+1} } P_{sym}|b\ra
}

Let's verify. 

Recall that $P_i\mbb{T}_a = \mbb{T}_aP_i$ and $K_i\mbb{T}_a = \mbb{T}_aK_i$ for any $i \in [-\frac{d-1}{2}..\frac{d-1}{2}]$, $a \in [0..d-1]^2$. 

For $i \in [0..\frac{d-1}{2}]$ denote 
\ea{
	\mu_i = \la b_i^0 | b_i^0 \ra = \frac{d+1}{2d} \la f|\la f| P_i K_i K_i P_i |f\ra|f\ra = \frac{d+1}{2d} \Tr\Big(P_i \cdot |f\ra|f\ra \la f|\la f| \Big)
}

For any $a \in [0..d-1]^2$ we also have 
\ea{
	\mu_i = \frac{d+1}{2d} \Tr\Big(P_i \cdot \mbb{T}_a |f\ra|f\ra \la f|\la f| \mbb{T}_{-a} \Big)
}
So after summation we have 
$$
d^2 \mu_i = \frac{d+1}{2d} \Tr\Big(P_i \cdot \sum_{a \in [0..d-1]^2} \mbb{T}_a |f\ra|f\ra \la f|\la f| \mbb{T}_{-a} \Big) = 
$$
\ea{
	= \frac{d+1}{2d} \Tr\Big( P_i \cdot \frac{2 d^2}{d(d+1)} P_{sym} \Big) = \Tr\Big( P_i \cdot P_{sym} \Big) = \Tr\Big( P_i \Big) = d
}
hence $\mu_i = \frac{1}{d}$. 

In a similar way we can prove orthogonality. 

For $i,j \in [0..\frac{d-1}{2}]$, $i \neq j$, denote 
\ea{
	\mu_{ij} = \la b_i^0 | b_j^0 \ra = \frac{d+1}{2d} \la f|\la f| P_i K_i K_j P_j |f\ra|f\ra = \frac{d+1}{2d} \Tr\Big(P_i K_i K_j P_j \cdot |f\ra|f\ra \la f|\la f| \Big)
}
For $a \in [0..d-1]^2$ we also have 
\ea{
	\mu_{ij} = \frac{d+1}{2d} \Tr\Big(P_i K_i K_j P_j \cdot \mbb{T}_a |f\ra|f\ra \la f|\la f| \mbb{T}_{-a} \Big)
}
So 
$$
d^2 \mu_{ij} = \frac{d+1}{2d} \Tr\Big(P_i K_i K_j P_j \cdot \sum_{a \in [0..d-1]^2} \mbb{T}_a |f\ra|f\ra \la f|\la f| \mbb{T}_{-a} \Big) = 
$$
\ea{
	= \frac{d+1}{2d} \Tr\Big( P_i K_i K_j P_j \cdot \frac{2 d^2}{d(d+1)} P_{sym} \Big) = \Tr\Big( K_i K_j \cdot P_j P_{sym} P_i \Big) = 0
}
hence $\mu_{ij} = 0$. 

\quad

2) The case when $d$ is even. 

Similarly, we have that 
\ea{
	|f\ra|f\ra =  \sum_{s\geq 0}^{(\alpha, \beta)=(\pm1,\pm1)} P_s^{(\alpha, \beta)} |f\ra|f\ra
}
For $(\alpha, \beta)=(\pm1,\pm1)$ and $s \geq 0$ define 
\ea{
	|b_s^{(\alpha, \beta)}\ra = \sqrt{ \frac{d+1}{2d} } K_s^{(\alpha, \beta)} P_s^{(\alpha, \beta)} |f\ra|f\ra \in H_\bullet^{(\alpha, \beta)}
}

Let's verify that $\forall (\alpha, \beta)=(\pm1,\pm1)$ vectors $|b_s^{(\alpha, \beta)}\ra$, $s\geq0$, are orthogonal to each other and have norm $\frac{1}{\sqrt{2d}}$.

For $s \in [0..\frac{d}{2}]$ denote 
$$
\mu_s = \la b_s^{(\alpha, \beta)} | b_s^{(\alpha, \beta)} \ra = \frac{d+1}{2d} \la f|\la f| P_s^{(\alpha, \beta)} |f\ra|f\ra = 
$$
\ea{	
	= \frac{d+1}{2d} \Tr\Big(P_s ^{(\alpha, \beta)} \cdot |f\ra|f\ra \la f|\la f| \Big)
}

For any $a \in [0..d-1]^2$ we also have 
\ea{
	\mu_s = \frac{d+1}{2d} \Tr\Big(P_s^{(\alpha, \beta)} \cdot \mbb{T}_a |f\ra|f\ra \la f|\la f| \mbb{T}_{-a} \Big)
}
So after summation
$$
d^2 \mu_s = \frac{d+1}{2d} \Tr\Big(P_s^{(\alpha, \beta)} \cdot \sum_{a \in [0..d-1]^2} \mbb{T}_a |f\ra|f\ra \la f|\la f| \mbb{T}_{-a} \Big) = 
$$
\ea{
	= \frac{d+1}{2d} \Tr\Big( P_s^{(\alpha, \beta)} \cdot \frac{2 d^2}{d(d+1)} P_{sym} \Big) = \Tr\Big( P_s^{(\alpha, \beta)} \cdot P_{sym} \Big) = \Tr\Big( P_s^{(\alpha, \beta)} \Big) = \frac{d}{2}
}
hence $\mu_s = \frac{1}{2d}$. 

For $s,t \in [0..\frac{d}{2}]$, $s \neq t$, denote 
$$
\mu_{st} = \la b_s^{(\alpha, \beta)} | b_t^{(\alpha, \beta)} \ra = \frac{d+1}{2d} \la f|\la f| P_s^{(\alpha, \beta)} K_s^{(\alpha, \beta)} K_t^{(\alpha, \beta)} P_t^{(\alpha, \beta)} |f\ra|f\ra = 
$$
\ea{
	= \frac{d+1}{2d} \Tr\Big(P_s^{(\alpha, \beta)} K_s^{(\alpha, \beta)} K_t^{(\alpha, \beta)} P_t^{(\alpha, \beta)} \cdot |f\ra|f\ra \la f|\la f| \Big)
}
For $a \in [0..d-1]^2$ we also have 
\ea{
	\mu_{st} = \frac{d+1}{2d} \Tr\Big(P_s^{(\alpha, \beta)} K_s^{(\alpha, \beta)} K_t^{(\alpha, \beta)} P_t^{(\alpha, \beta)} \cdot \mbb{T}_a |f\ra|f\ra \la f|\la f| \mbb{T}_{-a} \Big)
}
So 
$$
d^2 \mu_{st} = \frac{d+1}{2d} \Tr\Big(P_s^{(\alpha, \beta)} K_s^{(\alpha, \beta)} K_t^{(\alpha, \beta)} P_t^{(\alpha, \beta)} \cdot \sum_{a \in [0..d-1]^2} \mbb{T}_a |f\ra|f\ra \la f|\la f| \mbb{T}_{-a} \Big) = 
$$
$$
= \frac{d+1}{2d} \Tr\Big( P_s^{(\alpha, \beta)} K_s^{(\alpha, \beta)} K_t^{(\alpha, \beta)} P_t^{(\alpha, \beta)} \cdot \frac{2 d^2}{d(d+1)} P_{sym} \Big) = 
$$
\ea{
%= \Tr\Big( P_s^{(\alpha, \beta)} K_s^{(\alpha, \beta)} K_t^{(\alpha, \beta)} P_t^{(\alpha, \beta)} \cdot P_{sym} \Big) 
	= \Tr\Big( K_s^{(\alpha, \beta)} K_t^{(\alpha, \beta)} \cdot P_t^{(\alpha, \beta)} P_{sym} P_s^{(\alpha, \beta)} \Big) = 0
}
hence $\mu_{st} = 0$. Also, clearly, $| b_s^{(\alpha_1, \beta_1)} \ra$ is orthogonal to $| b_t^{(\alpha_2, \beta_2)} \ra $ if $(\alpha_1, \beta_1) \neq (\alpha_2, \beta_2)$. 

Similarly to the odd case, for every $(\alpha, \beta)=(\pm1,\pm1)$ we can complete the set $\{ |b_s^{(\alpha, \beta)}\ra \}_{s\geq 0}$ to a full basis on $H_\bullet^{(\alpha, \beta)}$ and construct $|b\ra \in \mbb{B}_{WH}$ that satisfy theorem requirement.
\end{proof}

\section{WH-type bases as a search space for SIC solutions}
\label{sec:search-space}
Theorem \ref{tm:main} suggests that one could use WH-type bases as a search space for SIC solutions of WH-type. To find SIC via this way the only thing needs to be satisfied is the Schmidt rank of the corresponding vector:
\begin{proposition}\label{prop:bwh-search}
Let $|b\ra \in \mbb{B}_{WH}$. If $srank(P_{sym}|b\ra)=1$, that is $P_{sym}|b\ra = \lambda |f\ra|f\ra$ for some unit-norm $|f\ra \in H$ and $\lambda>0$, then $|f\ra$ is a WH-type SIC fiducial vector (and $\lambda$ must be $\sqrt{ \frac{d+1}{2d} }$).
\end{proposition}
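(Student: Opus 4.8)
The plan is to verify the criterion of Theorem~\ref{tm:sic-sym-crit} for $|f\ra$ directly, feeding into it the tight-frame identity already supplied by Proposition~\ref{prop:bwh-sym-proj}.

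First I would pin down the constant $\lambda$. Since $|f\ra$ is unit-norm, the vector $|f\ra|f\ra$ has norm $1$, so $\|P_{sym}|b\ra\| = \lambda$; on the other hand the proof of Proposition~\ref{prop:bwh-sym-proj} shows, via $d^2\,\Tr\!\big(P_{sym}|b\ra\la b|P_{sym}\big) = \frac{d(d+1)}{2}$, that $\|P_{sym}|b\ra\| = \sqrt{\frac{d+1}{2d}}$. Hence $\lambda = \sqrt{\frac{d+1}{2d}}$, so $d^2\lambda^2 = \frac{d(d+1)}{2}$.

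Next, recall that $\mbb{T}_i$ commutes with $P_{sym}$, so
\ea{
	P_{sym}\mbb{T}_i|b\ra = \mbb{T}_i P_{sym}|b\ra = \lambda\,\mbb{T}_i|f\ra|f\ra \qquad \forall i \in [0..d-1]^2 .
}
Substituting this into the tight-frame relation established inside the proof of Proposition~\ref{prop:bwh-sym-proj},
\ea{
	\sum_{i \in [0..d-1]^2} P_{sym}\mbb{T}_i|b\ra\la b|\mbb{T}_{-i}P_{sym} = P_{sym},
}
gives $\lambda^2 \sum_{i \in [0..d-1]^2} \mbb{T}_i|f\ra|f\ra\la f|\la f|\mbb{T}_{-i} = P_{sym}$, and dividing by $d^2\lambda^2 = \frac{d(d+1)}{2}$ yields exactly
\ea{
	\frac{1}{d^2}\sum_{i \in [0..d-1]^2} \mbb{T}_i|f\ra|f\ra\la f|\la f|\mbb{T}_{-i} = \frac{2}{d(d+1)}P_{sym},
}
which is the WH-type SIC criterion of Theorem~\ref{tm:sic-sym-crit}. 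As $|f\ra$ is unit-norm, that theorem then asserts that $|f\ra$ is a fiducial vector of a SIC of WH-type, finishing the argument.

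I do not expect a genuine obstacle here: the mathematical content is carried entirely by Proposition~\ref{prop:bwh-sym-proj} and Theorem~\ref{tm:sic-sym-crit}, and the only care required is the bookkeeping of normalization constants — in particular the value of $\lambda$, which one may read off either from $\|P_{sym}|b\ra\|$ as above, or alternatively by taking the trace of the identity $\lambda^2 \sum_i \mbb{T}_i|f\ra|f\ra\la f|\la f|\mbb{T}_{-i} = P_{sym}$, which forces $\lambda^2 d^2 = \frac{d(d+1)}{2}$ and recovers the same value without a separate computation.
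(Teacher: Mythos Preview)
Your proof is correct and follows essentially the same approach as the paper: both invoke Proposition~\ref{prop:bwh-sym-proj} to obtain the tight-frame identity and the value of $\lambda$, then feed the result into the WH-type SIC criterion of Theorem~\ref{tm:sic-sym-crit}. The paper compresses these steps into two sentences, whereas you spell out the substitution and normalization explicitly, but the logical content is identical.
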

\begin{proof}
By prop. \ref{prop:bwh-sym-proj} we have that $\{ \sqrt{ \frac{2d}{d+1} } \lambda \mbb{T}_a |f\ra|f\ra \}_{a\in[0..d-1]^2}$ is the unit-norm tight frame on $H_{sym}$. Hence $\lambda=\sqrt{ \frac{d+1}{2d} }$ and from theorem \ref{tm:sic-sym-crit} we deduce that $|f\ra$ must be a SIC fiducial. 
\end{proof}

Another argument to study $\mbb{B}_{WH}$ as a search space is that for the pretty simple element $|0\ra F|0\ra \in \mbb{B}_{WH}$ we already have $srank(P_{sym}|0\ra F|0\ra) = 2$. It also generates ETF (in general for $|b\ra \in \mbb{B}_{WH}$ the tight frame $\{ \mbb{T}_aP_{sym}|b\ra \}_{a\in[0..d-1]^2}$ is not equiangular). 

Computationally, the Schmidt rank of the density operator $\rho$ on $H \otimes H$ (i.e. $\rho=\rho^\dag\geq0$, $\Tr(\rho)=1$) can be computed as the rank of the partial trace $\rho_2 = \text{Tr}_1(\rho)$. Since $\Tr(\rho_2)=1$ one could try to maximize the function $\Tr( \rho_2^2 )$ over $\rho_2 = \text{Tr}_1(\rho)$, where $\rho = \frac{2d}{d+1}P_{sym}|b\ra\la b| P_{sym}$, $|b\ra \in \mbb{B}_{WH}$. The maximal value $\Tr( \rho_2^2 )=1$ would mean that a SIC solution is found. 

Unfortunately, our initial computational experiments (performed via gradient descent methods from \textit{Optim.jl} package in the Julia programming language) showed that one can stuck in a local maxima of $\Tr( \rho_2^2 )$. Though, clearly, this method must converge to a SIC if you are already close enough to a SIC solution (in the natural metric). 

\section{SIC-related symmetric tight fusion frames}
\label{sec:STFF}

Assume $d$ is odd and $|f\ra \in H$ is a fiducial vector of a WH-type SIC. For $a\in[0..d-1]^2$ let $\theta_a \in \mbb{R}$ describe the phases of scalar products between SIC vectors, that is 
\ea{
e^{i\theta_{(0,0)}} = \la f | f \ra = 1, \text{ so } \theta_{(0,0)} = 0,
\nn
	e^{i\theta_a} = \la f | T_{-a} |f\ra \cdot \sqrt{d+1}
}

In \cite{STFF}, theorem 7, it was discovered that for any matrix $X \in \text{GL}(2, \mbb{Z}/d\mbb{Z})$ with $\text{det}X=2^{-1} (\text{mod} ~ d)$ the following operator 
\ea{
	\Pi^{+} = \frac{d+1}{2d}I_d + \frac{1}{2d}\sum_{a\in[0..d-1]^2}^{a\neq(0,0)}e^{2i\theta_{Xa}}T_a
}
is the fiducial projector of rank $(d+1)/2$ of a symmetric tight fusion frame (of WH-type). That is if we denote $\Pi^{+}_a = T_a\Pi^{+}T_{-a}$ for $a \in [0..d-1]^2$ then 
\ea{
	(\Pi^{+}_a)^2 = (\Pi^{+}_a)^\dag = \Pi^{+}_a, ~~~ \Tr(\Pi^{+}_a) = \frac{d+1}{2},
}
\ea{
	\Tr(\Pi_a^{+}\Pi_b^{+}) = \frac{1}{4}(d+2 + d \delta_{ab}),
} 
\ea{
	\sum_{a\in[0..d-1]^2}\Pi_a^{+} = \frac{d(d+1)}{2} I_d.
}

Also, clearly, $\Pi^{-} = I - \Pi^{+}$ generates WH-type STFF of rank $(d-1)/2$. 

In the proof of theorem \ref{tm:main} we saw that the vectors $|b^0_i\ra = \sqrt{ \frac{d+1}{2d} } K_i P_i |f\ra|f\ra$, $i \in [0..\frac{d-1}{2}]$, are orthogonal to each other, have norm $\frac{1}{\sqrt{d}}$ and all lay in $H_0$. So there is the natural projector 
\ea{
	B^{+} = d \sum_{i \in [0..\frac{d-1}{2}]} |b^0_i\ra \la b^0_i |
}
of rank $(d+1)/2$ with the image in $H_0$. It's reasonable to expect that $B^{+} |_{H_0}$ coincides with $\Pi^{+}$ (under $O_0$ isometry defined by $O_0 | i \ra = | i \ra| i \ra$, $i \in [0..d-1]$) for some $X \in \text{GL}(2, \mbb{Z}/d\mbb{Z})$ with $\text{det}X=2^{-1}$. Indeed, let's check:

$$
\Tr \big( B^+ \mbb{T}_c \big) = \Tr \bigg( d \sum_{i \in [0..\frac{d-1}{2}]} |b^0_i\ra \la b^0_i |  \mbb{T}_c \bigg) 
= d \sum_{i \in [0..\frac{d-1}{2}]} \Tr \bigg( \la b^0_i |  \mbb{T}_c |b^0_i\ra  \bigg) = 
$$
$$
%= d \sum_{i \in [0..\frac{d-1}{2}]} \Tr \big( \la b^0_i |  \mbb{T}_c |b^0_i\ra  \big) 
= d \cdot \frac{d+1}{2d} \sum_{i \in [0..\frac{d-1}{2}]} \Tr \bigg( \la f | \la f | P_i K_i \mbb{T}_c K_i P_i | f \ra | f \ra  \bigg) 
=
$$
$$
= \frac{d+1}{2} \sum_{i \in [0..\frac{d-1}{2}]} \Tr \bigg( \la f | \la f | P_i  \mbb{T}_c | f \ra | f \ra  \bigg) =
$$
\ea{
	= \frac{d+1}{2} \Tr \bigg( \la f | \la f | P_{sym}  \mbb{T}_c | f \ra | f \ra  \bigg) 
= \frac{d+1}{2} \big( \la f | T_c | f \ra \big)^2
}
Recall that $\forall c=(c_1,c_2) \in [0..d-1]^2$:
\ea{
	\mbb{T}_c |_{H_0} = \tau^{2c_1c_2} (S^{\otimes 2})^{c_1} (C^{\otimes 2})^{c_2} |_{H_0} = 
	\nn
= O_0 \cdot \tau^{2c_1c_2} S^{c_1} C^{2c_2} \cdot O_0^\dag 
= O_0 \cdot T_{(c_1,2c_2)} \cdot O_0^\dag
}
Hence
\ea{
	\Tr \big( O_0^\dag \cdot B^+ |_{H_0} \cdot O_0 \cdot T_{(c_1,2c_2)} \big) = \frac{d+1}{2} \big( \la f | T_c | f \ra \big)^2
}

On the other hand, for  $X = diag\{1,2^{-1}(\text{mod} ~ d)\} \in \text{GL}(2, \mbb{Z}/d\mbb{Z})$ and $b = (c_1,2c_2) \neq (0,0)$ we have 
$$
\Tr \big( \Pi^+ T_b \big)  = \Tr \bigg( \frac{d+1}{2d}T_b + \frac{1}{2d}\sum_{a\in[0..d-1]^2}^{a\neq(0,0)}e^{2i\theta_{Xb}}T_a  T_b \bigg)
=
$$ 
\ea{
	= d \cdot \frac{1}{2d} e^{2i\theta_{-Xb}} = \frac{d+1}{2} \big( \la f | T_c | f \ra \big)^2
}
Hence $\Pi^{+} = O_0^\dag \cdot B^{+} |_{H_0} \cdot  O_0$ for $X = diag\{1,2^{-1}(\text{mod} ~ d)\}$.

\section{Conclusions} 
\label{conc}
In this work we've investigated the properties of a WH-type SIC tensor square. We've proved that in any case such squared frame can be obtained as a projection of a WH-type basis of $H\otimes H$ onto the symmetric subspace. We gave a full description of the set of all WH-type bases and showed that it's reasonable to use this set as a search space for SIC solutions. Also we showed that a particular element of this set is close to a SIC solution in some structural sense. In the process of our investigation we've found a geometric construction of a SIC-related symmetric tight fusion frames.

%\section*{Acknowledgements}

%\begin{acknowledgements}
%If you'd like to thank anyone, place your comments here
%and remove the percent signs.
%\end{acknowledgements}

% Authors must disclose all relationships or interests that 
% could have direct or potential influence or impart bias on 
% the work: 
%
% \section*{Conflict of interest}
%
% The authors declare that they have no conflict of interest.

% BibTeX users please use one of
%\bibliographystyle{spbasic}      % basic style, author-year citations
%\bibliographystyle{spmpsci}      % mathematics and physical sciences
%\bibliographystyle{spphys}       % APS-like style for physics
%\bibliography{}   % name your BibTeX data base

% Non-BibTeX users please use

\end{document}